\theoremstyle{plain}
\newtheorem{thm}{Theorem}
\newtheorem{assum}{Assumption}
\newtheorem{rmk}{Remark}
\newtheorem{prop}{Proposition}
\newtheorem{lem}{Lemma}
\newtheorem{prob}{Problem}
\newcommand{\Rset}{\mathbb{R}}
\newcommand{\hd}{{\hat{d}}}
\newcommand{\hx}{{\hat{x}}}
\newcommand{\CC}{{\mathcal{C}}}
\newcommand{\DD}{{\mathcal{D}}}
\newcommand{\EE}{{\mathcal{E}}}
\newcommand{\FF}{{\mathcal{F}}}
\newcommand{\GG}{{\mathcal{G}}}
\newcommand{\LL}{{\mathcal{L}}}
\newcommand{\MM}{{\mathcal{M}}}
\newcommand{\NN}{{\mathcal{N}}}
\newcommand{\OO}{{\mathcal{O}}}
\newcommand{\PP}{{\mathcal{P}}}
\newcommand{\QQ}{{\mathcal{Q}}}
\newcommand{\VV}{{\mathcal{V}}}
\newcommand{\diag}{{\mbox{diag}}}                  % diag function
\newcommand{\norme}[2]{||{#1}||_{#2}}            % norme ||#||_#
\newcommand{\mbf}[1]{\mathbf{#1}}                  % bold symbol
\newcommand{\subss}[2]{{#1}_{[#2]}}
\begin{document}
	\title{\LARGE \bf Voltage stabilization in DC microgrids: an
          approach based on line-independent plug-and-play controllers}
        % make the title area
          \author[1]{Michele Tucci%
       \thanks{Electronic address:
         \texttt{michele.tucci02@universitadipavia.it}; Corresponding author}}
        \author[2]{Stefano Riverso%
       \thanks{Electronic address: \texttt{riverss@utrc.utc.com}}} 
     \author[3]{Giancarlo Ferrari-Trecate%
       \thanks{Electronic address: \texttt{giancarlo.ferraritrecate@epfl.ch}} }

     \affil[1]{Dipartimento di Ingegneria Industriale e
       dell'Informazione\\Universit\`a degli Studi di Pavia}
     \affil[2]{United Technologies Research Center Ireland}  
     \affil[3]{Automatic Control Laboratory, \'Ecole Polytechnique F\'ed\'erale de Lausanne (EPFL), 1015 Lausanne, Switzerland.}
     \date{\textbf{Technical Report}\\ December, 2016}

     \maketitle
     % As a general rule, do not put math, special symbols or citations in the abstract
     \begin{abstract}
We consider the problem of stabilizing voltages in  DC
microGrids (mGs) given by the interconnection of Distributed Generation
Units (DGUs), power lines and loads. 
%As in \cite{Tucci2015a}, 
We propose a
decentralized control architecture where the primary controller of each DGU can be
designed in a Plug-and-Play (PnP) fashion, allowing the seamless addition of new DGUs.
Differently from several other approaches to primary control, local design is independent of 
the parameters of power lines. Moreover, differently from the PnP control scheme in \cite{Tucci2015a}, the plug-in of a DGU does not require to update controllers of neighboring DGUs. Local control design is cast into a  
Linear Matrix Inequality (LMI) problem that, if unfeasible, allows one to deny plug-in requests that might be dangerous for mG stability. 
The proof of closed-loop stability of voltages 
exploits structured Lyapunov functions, the
LaSalle invariance theorem and properties of graph Laplacians. 
Theoretical results are backed up by
simulations in PSCAD.
     \end{abstract}

\newpage
       \section{Introduction}
\label{sec:intro}
DC mGs are networks combining energy sources, interfaced through
converters, and loads connected through power lines. As DC mGs can be coupled to the main grid through AC-DC converters only, they can be thought as always operating in islanded mode. This is due to the fact that converters have finite power rating, which can limit substantially the power transfer.

Motivated by advances in power electronics, batteries and renewable DC
sources, DC mGs find nowadays applications in
various field such as high-efficiency households, electric vehicles,
hybrid energy storage systems, data centers, avionics and marine
systems \cite{dragicevic2016dc}.

In order to operate DC mGs in
a safe and reliable way, several challenges must be addressed. A central one is to
provide voltage stability \cite{dragicevic2015dc,7172122,zhao2015distributed,zonetti2014globally,Tucci2015a,Tucci2015dc}. This property is
fundamental and should be guaranteed even in presence of unreliable communication channels among local controllers or DGUs. This promoted the study of decentralized control
architectures at the primary level of the mG control hierarchy \cite{guerrero2011hierarchical}. The most popular solutions are based on droop controllers for
each DGU, built on top inner voltage and current loops
\cite{guerrero2011hierarchical}. Stability of droop control has been
however shown either for specific mG topologies \cite{shafiee2014hierarchical}
or for general topologies, but relying on networked secondary
regulators \cite{zhao2015distributed}. 

An alternative class of decentralized primary controllers,
termed PnP according to the terminology used in \cite{Riverso2013c,Riverso2014a} and \cite{7040312}, has been proposed in \cite{Tucci2015a,Tucci2015dc}. 
The main feature of the PnP approach is to allow the addition
and removal of DGUs independently of the mG size and without retuning
all local controllers. 

In particular, in \cite{Tucci2015a,Tucci2015dc}, the computation of a local
controller requires to solve an LMI problem based only on information about the
corresponding DGU and power lines connecting neighboring DGUs. This has two consequences. First, if a DGU 
is plugged-in, its neighbors must update their
controllers through LMIs, as they will be connected to new lines. 
Second, in order to preserve stability of the mG, the plug-in of the DGU must be 
denied if one of the LMI problems is infeasible. 
Another feature of PnP control is that LMIs give, as a byproduct, 
local structured Lyapunov functions
that can be used for certifying asymptotic stability of the whole mG.

The dependence of a local controller on power line parameters can be critical when they are not precisely known. This problem has been considered in 
\cite{Cezar_Rajagopal_Zhang_2015} and \cite{Josep_Hamzeh_Ghafouri_Member_Karimi_Sheshyekani_Guerrero_2016} where line-independent controllers have been proposed. However, in 
\cite{Cezar_Rajagopal_Zhang_2015} and \cite{Josep_Hamzeh_Ghafouri_Member_Karimi_Sheshyekani_Guerrero_2016} voltage stability has been studied only for mGs with at most four DGUs.

In this paper, we propose a variant of the PnP design algorithm in
\cite{Tucci2015a,Tucci2015dc}. The main novelty is that the
design of local controllers is line-independent and the only global quantity used in the synthesis algorithm is a scalar parameter.
%
%does not require knowledge of 
%electrical parameters of lines. This is beneficial,
%as lines parameters are seldom known with high accuracy. Furthermore,
As a consequence, even if a DGU
wants to plug -in or -out, its neighbors do not have to update
local controllers. This considerably simplifies the plug-in protocol
described in \cite{Tucci2015a,Tucci2015dc}, as switching of local controllers is avoided and
bumpless control architectures (for avoiding abrupt changes in the
control variables) are unnecessary. Moreover, the new design procedure better complies
with privacy requirements of energy markets where DGUs can
have different owners. Indeed, the addition of new DGUs does not
require other stakeholders to disclose models of their own DGUs or change their operation.
For the new design procedure, the structure of each local controller is identical to the
one proposed in \cite{Tucci2015a,Tucci2015dc}. Also LMI problems associated to control design
are similar to those in \cite{Tucci2015a,Tucci2015dc}. However, the proof of asymptotic stability of the
closed-loop system is substantially different and more involved. In
particular, it is based on the fact that, under Quasi Stationary Line (QSL) approximations,
electrical coupling among DGUs can be described by graph
Laplacians \cite{dorfler2013kron}. This feature, together with the use of structured Lyapunov
functions and the LaSalle Invariance
principle \cite{khalil2001nonlinear}, allows us to derive the desired result.

The paper is structured as follows. Results from \cite{Tucci2015a,Tucci2015dc} about the DGU model and the structure of
PnP controllers are summarized in Sections~\ref{sec:Model}  and
\ref{sec:aug_sys}. The new PnP approach to the design of local
controllers is described in Sections~\ref{sec:pnp_design}, \ref{subsec:LMI} and \ref{sec:PnP}, along with
the stability analysis of the closed-loop system. 
%Operations that must be performed when a DGU is plugged in or out are
%detailed in Section~\ref{sec:PnP}.
Simulations in PSCAD
using a 5-DGU mG and illustrating the plug-in of an additional DGU are
described in Section~\ref{sec:scenario_2}. 

\textbf{Notation.} We use $P>0$ (resp. $P\geq 0$) for indicating the
real symmetric matrix $P$ is positive-definite
(resp. positive-semidefinite). Let
	$A\in\mathbb{R}^{n\times m}$ be a
	matrix inducing the linear map $A:\mathbb{R}^m\rightarrow \mathbb{R}^n$. The \textit{image} and the \textit{nullspace} (or
	\textit{kernel}) of $A$ are indicated with $\text{Im}(A)$ and
	$\text{Ker}(A)$, respectively. The
	symbol $\oplus$ refers to the sum of subspaces that are orthogonal (also called \textit{orthogonal direct sum}).
% Matrices with zero entries and dimension $n\times m$ are denoted as
% $\mathbf{0}_{n\times m}$.
\section{DC Microgrid model}
          \label{sec:Model}      
\subsection{DGU electrical model}
\label{sec:el_model}
In this section, we describe the electrical model of a DC mG considered in 
\cite{Tucci2015a,Tucci2015dc}.
The electrical scheme of the $i$-th DGU represented within
the dashed frame in Figure \ref{fig:ctrl_part}.

The DC voltage source represents a generic renewable
resource\footnote{This approximation is reasonable
	since renewable power fluctuations
	take place at a slow timescale, compared to the one we are interested in for stability analysis. Moreover, renewables are
	usually equipped with storage units damping stochastic
	fluctuations.} and a Buck converter is commanded in order to supply a local DC
load connected to the Point of Common Coupling (PCC) through a series
$RL$ filter. We also assume that loads are unknown and act
as current disturbances ($I_L$) \cite{Tucci2015a,Tucci2015dc,Babazadeh2013}. 
Let us consider an mG composed of $N$ DGUs and let $\DD=\{1,\dots,N\}$. We call
two DGUs neighbors if there is a
power line connecting them and denote with
$\NN_i\subset\DD$ the subset of neighbors of DGU
$i$. The neighboring relation is symmetric:
$j\in\NN_i$ implies
$i\in\NN_j$. Furthermore, let $\mathcal
E=\{(i,j):$ $i\in\DD,$ $j\in\NN_i\}$ collect unordered pairs of
indices associated to lines. 
%From the
%                 symmetry of the neighboring relation, if
%                $(i,j)\in\mathcal E$ then also $(j,i)\in\mathcal
%               E$. 
The topology of the mG is then described by the
undirected graph $\GG_{el}$ with nodes $\DD$ and
edges $\EE$.

From Figure \ref{fig:ctrl_part}, by applying Kirchoff's voltage and current laws, and exploiting QSL approximation
of power lines \cite{Venkatasubramanian1995,Tucci2015a,Tucci2015dc}, we obtain the following model of DGU
$i$
%\footnote{We defer the reader to \cite{Tucci2015a,Tucci2015dc} for the detailed
%derivation of the DGU dynamics.}
\begin{equation}
	\label{eq:newDGU}
	\text{DGU}~i:\hspace{-4mm}\quad\left\lbrace
	\begin{aligned}
		%\frac{dV_{i}}{dt} &=
		%\frac{1}{C_{ti}}I_{ti}-\frac{1}{C_{ti}}I_{Li}+\frac{1}{C_{ti}}\bar{I}_{ij}\\
		\frac{dV_{i}}{dt} &= \frac{1}{C_{ti}}I_{ti}+\sum\limits_{j\in\NN_i}\left(\frac{V_j}{C_{ti} R_{ij}}-\frac{V_i}{C_{ti}R_{ij}}\right)-\frac{1}{C_{ti}}I_{Li}\\
		\frac{dI_{ti}}{dt} &= -\frac{1}{L_{ti}}V_{i}-\frac{R_{ti}}{L_{ti}}I_{ti}+\frac{1}{L_{ti}}V_{ti}\\
	\end{aligned}
	\right.
\end{equation}
where variables $V_i$, $I_{ti}$, are the $i$-th PCC voltage and
filter current, respectively, $V_{ti}$ represents the command to the
Buck converter, and $R_{ti}$, $L_{ti}$ and $C_{ti}$ the converter electrical
parameters. Moreover, $V_{j}$ is the voltage at the
PCC of each neighboring DGU $j\in\NN_i$ and $R_{ij}$ is the resistance
of the power DC line connecting DGUs $i$ and $j$. 
\begin{figure}
	\centering
	\includegraphics[scale=0.5]{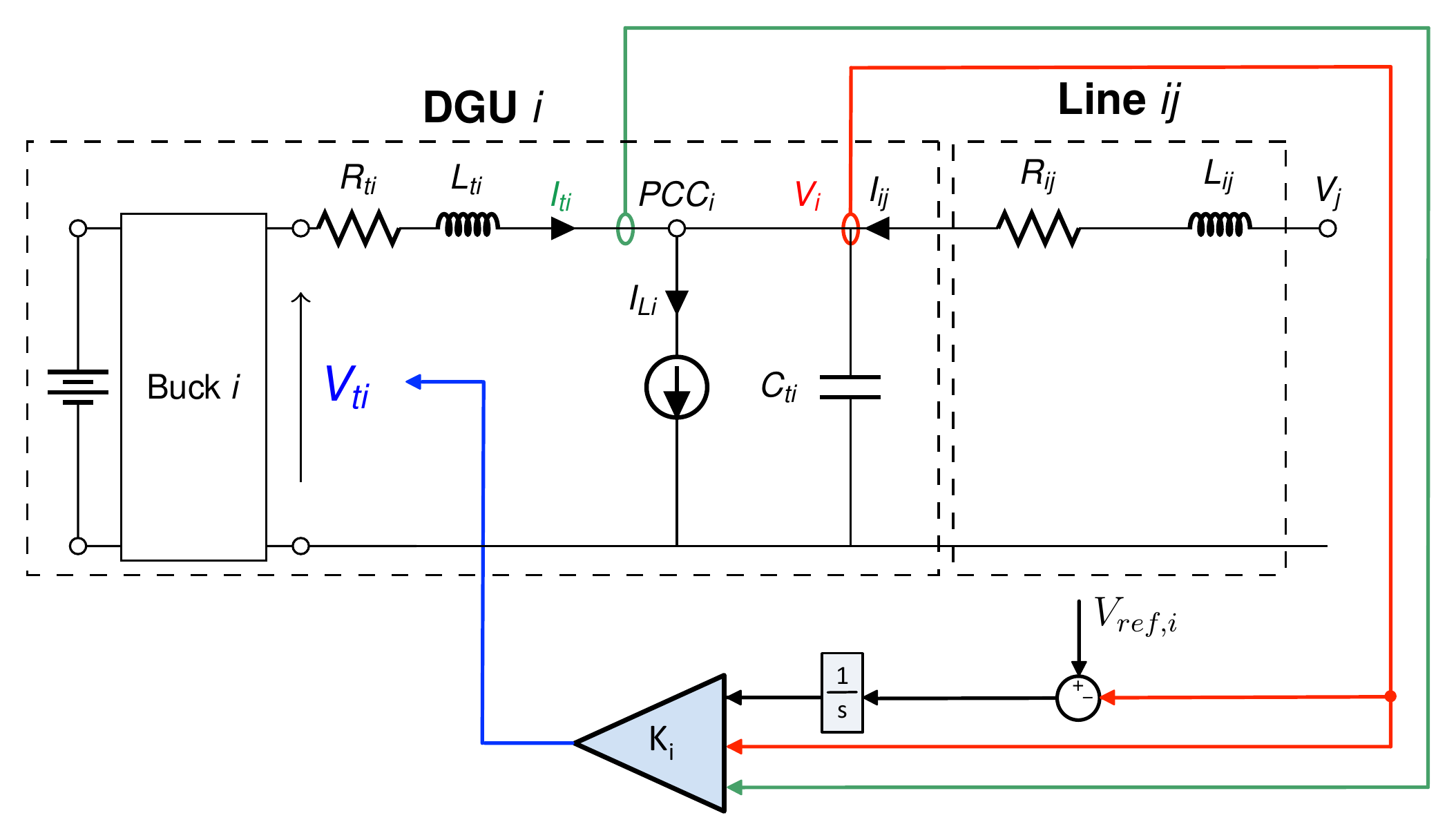}
	\caption{Electrical scheme of DGU $i$ (in the dashed frame) and local PnP voltage controller.}
	\label{fig:ctrl_part}
\end{figure}
\begin{rmk}
	Model \eqref{eq:newDGU} hinges on three main assumptions. First, Buck converter dynamics, that are inherently switching, have been averaged over time. This is however a mild approximation for modern converters that can operate at very high frequencies. Second, QSL approximations have been used. They amount to assume that power lines are mainly resistive and they have been justified in terms of singular perturbation theory \cite{Tucci2015a}. Third, loads are connected at the PCC of each DGU. It has been  shown that general interconnections of loads and DGUs can always be mapped into this topology using a network reduction method known as Kron reduction \cite{dorfler2013kron}.
\end{rmk}
\subsection{State-space model of the mG}
\label{sec:State-space model of the mG}
%We consider an mG composed of $N$ DGUs. 
Dynamics \eqref{eq:newDGU} can be written in terms of
state-space variables as follows 
\begin{equation*}
	\text{ $\subss{\Sigma}{i}^{DGU}:$}\left\lbrace
	\begin{aligned}
		\subss{\dot{x}}{i}(t) &= A_{ii}\subss{x}{i}(t) +
		B_{i}\subss{u}{i}(t)+M_{i}\subss{d}{i}(t)+\subss\xi i(t)\\
		%   \subss{y}{i}(t)       &= C_{i}\subss{x}{i}(t)\\
		\subss{z}{i}(t)       &= H_{i}\subss{x}{i}(t)\\
	\end{aligned}
	\right.
\end{equation*}
where $\subss{x}{i}=[V_{i},I_{ti}]^T$ is the state,
$\subss{u}{i} = V_{ti}$ the control input,
$\subss{d}{i} = I_{Li}$ the exogenous input and
$\subss{z}{i} = V_{i}$ the controlled variable of the
system. 
%Moreover, $\subss y i$ is the measurable
%output and we assume $\subss{y}{i}=\subss{x}{i}$. 
The
term $\subss\xi i=\sum_{j\in\NN_i}A_{ij}(\subss x j-\subss x i)$ accounts
for the coupling with each DGU $j\in\NN_i$. 
This model is identical to the one provided in \cite{Tucci2015a,Tucci2015dc}, except that all
coupling terms have been embedded in variables $\subss{\xi} i$.
The matrices of
$\subss{\Sigma}{i}^{DGU}$ are obtained from
\eqref{eq:newDGU} as:

\begin{equation*}
	\renewcommand\arraystretch{1.5}
	A_{ii}=\begin{bmatrix}
		0 & \frac{1}{C_{ti}} \\
		-\frac{1}{L_{ti}} & -\frac{R_{ti}}{L_{ti}} \\
	\end{bmatrix},   \hspace{3mm} A_{ij}=
	\begin{bmatrix}
		\frac{1}{R_{ij}C_{ti}}  & 0 \\
		0 & 0 
	\end{bmatrix},
\end{equation*}
\begin{equation*}
	B_{i}=\begin{bmatrix}
		0 \\
		\frac{1}{L_{ti}}
	\end{bmatrix},
	\qquad
	M_{i}=\begin{bmatrix}
		-\frac{1}{C_{ti}} \\
		0 \\
	\end{bmatrix},
	%\qquad
	%C_{i}=\begin{bmatrix}
	%  1&0\\
	%  0&1
	%\end{bmatrix} ,
	\qquad
	H_{i}=\begin{bmatrix}
		1 & 0 
	\end{bmatrix}.
\end{equation*}
%\begin{rmk}
%QSL-approximations can be justified in terms of
%singular perturbation theory \cite{Tucci2015a} and they make power line
%dynamics negligible for stability analysis. Consequently, we focus
%only on the system composed of   $\subss{\Sigma}{i}^{DGU}$, $i\in\DD$
%and we will
%refer to it as the QSL-mG model.
%\end{rmk}         
The overall mG model is given by
\begin{equation}
	\begin{aligned}
		\label{eq:stdformA}
		\mbf{\dot{x}}(t) &= \mbf{Ax}(t) + \mbf{Bu}(t)+ \mbf{Md}(t)\\
		% \mbf{y}(t)       &= \mbf{Cx}(t)\\
		\mbf{z}(t)       &= \mbf{Hx}(t)
	\end{aligned}
\end{equation}
where $\mbf x = (\subss x 1,\ldots,\subss x
N)\in\Rset^{2N}$, $\mbf u = (\subss u 1,\ldots,\subss u
N)\in\Rset^{N}$, $\mbf d = (\subss d 1,\ldots,\subss d
N)\in\Rset^{N}$, %$\mbf y = (\subss y 1,\ldots,\subss y
%N)\in\Rset^{2N}$, 
$\mbf z = (\subss z 1,\ldots,\subss z
N)\in\Rset^{N}$. Matrices $\mbf{A}$, $\mbf{B}$, $\mbf
M$ and $\mbf H$ are reported in Appendix
A.2 and A.3 of \cite{Tucci2015dc}. 

     \section{Design of stabilizing voltage controllers}
         \label{sec:PnPctrl}
        \subsection{Structure of local controllers}
        \label{sec:aug_sys}
        Let $\mbf{z_{ref}}(t)$ be the desired
        reference trajectory for the output $\mbf{z}(t)$. As in
        \cite{Tucci2015a,Tucci2015dc}, in
        order to track constant references $\mbf{z_{ref}}(t)=\mbf{\bar
        	z_{ref}}$, when
        $\mbf{d}(t)=\mbf{\bar d}$ is constant as well, we augment the mG
        model with integrators \cite{Skogestad1996}. A
        necessary condition for steering to zero the
        error $\mbf{e}(t)=\mbf{z_{ref}}(t)-\mbf{z}(t)$ as $t\rightarrow\infty$, is that, for arbitrary
        $\mbf{\bar d}$ and $\mbf{\bar
        	z_{ref}}$, there are equilibrium states and inputs $\mbf{\bar
        	x}$ and $\mbf{\bar u}$ verifying \eqref{eq:stdformA}.

        The existence of these equilibria can be shown following the proof of
        Proposition 1 in \cite{Tucci2015a}.
        
        The dynamics of the integrators is (see Figure
        \ref{fig:ctrl_part}, where $z_{ref_{[i]}}=V_{ref,i}$)
        \begin{equation}
        	\begin{aligned}
        		\subss{\dot{v}}{i}(t) = \subss{e}{i}(t) &= \subss{z_{ref}}{i}(t)-\subss{z}{i}(t) \\
        		&= \subss{z_{ref}}{i}(t)-H_{i}\subss{x}{i}(t),
        	\end{aligned}
        	\label{eq:intdynamics}
        \end{equation}
        and hence, the augmented DGU model is
        \begin{equation}
        	\label{eq:modelDGUgen-aug}
        	\subss{\hat{\Sigma}}{i}^{DGU} :
        	\left\lbrace
        	\begin{aligned}
        		\subss{\dot{\hat{x}}}{i}(t) &= \hat{A}_{ii}\subss{\hat{x}}{i}(t) + \hat{B}_{i}\subss{u}{i}(t)+\hat{M}_{i}\subss{\hat{d}}{i}(t)+\subss{\hat\xi}i(t)\\
        		% \subss{\hat{y}}{i}(t)       &= \hat{C}_{i}\subss{\hat{x}}{i}(t)\\
        		\subss{z}{i}(t)       &= \hat{H}_{i}\subss{\hat{x}}{i}(t)
        	\end{aligned}
        	\right.
        \end{equation}
        where $\subss{\hat{x}}{i}=[\subss{x^T}
        i,v_{i,}]^T\in\Rset^3$ is the state,
        $\subss{\hat{d}}{i}=[\subss{d}{i},\subss{z_{ref}}{i}]^T\in\Rset^2$
        collects the exogenous signals and
        $\subss{\hat\xi}i=\sum_{j\in\NN_i}\hat{A}_{ij}(\subss{\hat{x}}{j}-\subss{\hat{x}}{i})$. Matrices
        in \eqref{eq:modelDGUgen-aug} are defined as follows
        \begin{equation*}
        	\begin{aligned}
        		\hat{A}_{ii} &=\begin{bmatrix}
        			A_{ii} & 0\\
        			-H_{i}C_{i} & 0
        		\end{bmatrix},
        		\hat{A}_{ij}=\begin{bmatrix}
        			A_{ij} &0\\
        			0&0
        		\end{bmatrix},
        		\hat{B}_{i}=\begin{bmatrix}
        			B_{i}\\
        			0
        		\end{bmatrix},
        		\\
        		\hat{M}_{i}&=\begin{bmatrix}
        			M_{i} & 0 \\
        			0 & 1
        		\end{bmatrix},\hspace{5mm}
        		\hat{H}_{i}=\begin{bmatrix}
        			H_{i} & 0
        		\end{bmatrix}.
        	\end{aligned}
        \end{equation*}   
        As in Proposition 2 of \cite{Tucci2015a}, one can show that
        positivity of electrical parameters guarantees that the
        pair $(\hat{A}_{ii},\hat{B}_i)$ is controllable. Hence,
        system \eqref{eq:modelDGUgen-aug} can be stabilized.
        
        The overall augmented system is obtained from \eqref{eq:modelDGUgen-aug} as
        \begin{equation}
        	\label{eq:sysaugoverall_1}
        	\left\lbrace
        	\begin{aligned}
        		\mbf{\dot{\hat{x}}}(t) &= \mbf{\hat{A}\hat{x}}(t) + \mbf{\hat{B}u}(t)+ \mbf{\hat{M}\hat{d}}(t)\\
        		% \mbf{\hat{y}}(t)       &= \mbf{\hat{C}\hat{x}}(t)\\
        		\mbf{z}(t)       &= \mbf{\hat{H}\hat{x}}(t)
        	\end{aligned}
        	\right.
        \end{equation}
        where $\mbf{\hat{x}}$ and $\mbf{\hat{d}}$ collect variables $\subss{\hat{x}}{i}$ and $\subss{\hat{d}}{i}$ respectively, and matrices $\mbf{\hat{A}}, \mbf{\hat{B}}, \mbf{\hat{M}}$ and $\mbf{\hat{H}}$ are obtained from systems \eqref{eq:modelDGUgen-aug}. 
        
        Now we equip each DGU $\subss{\hat{\Sigma}}{i}^{DGU}$ with the following state-feedback controller
        \begin{equation}
        	\label{eq:ctrldec}
        	\subss{\CC}{i}:\qquad \subss{u}{i}(t)=K_{i}\subss{\hat{x}}{i}(t)
        \end{equation}
        where $K_{i}=\left[k_{1,i}\text{ }k_{2,i}\text{ }k_{3,i}\right]\in\Rset^{1\times3}$. It turns out that, together with the integral action \eqref{eq:intdynamics}, controllers $\subss{\CC}{i}$, define a multivariable PI regulator, see Figure~\ref{fig:ctrl_part}. In particular, the overall control architecture is
        decentralized since the computation of
        $\subss{u}{i}$ requires the state of
        $\subss{\hat{\Sigma}}{i}^{DGU}$ only. It is important to highlight that, in general, decentralized design of local regulators can fail to guarantee voltage stability of the whole mG, if couplings among DGUs are neglected during the design phase (see Appendix A).	In the sequel, we show how structured Lyapunov functions can be used to ensure asymptotic stability of the whole mG, when DGUs are equipped with controllers \eqref{eq:ctrldec}.
        
	  \subsection{Conditions for stability of the closed-loop mG}
           \label{sec:pnp_design}
        In absence of coupling terms $\subss{\hat\xi}i(t)$, one
        would like to guarantee asymptotic stability of the nominal closed-loop subsystem 
        \begin{equation}
        		\label{eq:modelDGUgen-aug-closed}
        		\subss{\dot{\hat{x}}}{i}(t) = \underbrace{(\hat{A}_{ii}+\hat{B}_{i}K_i)}_{F_{i}}\subss{\hat{x}}{i}(t)+\hat{M}_{i}\subss{\hd}{i}(t).\\
        \end{equation}
        By direct calculation, one can show that $F_{i}$ has the following structure
        \begin{equation}
		\begin{aligned}
        			 \label{eq:Fi}
        			F_{i}&=\left[ \begin{array}{c|cc}
        				0 & f_{12,i} & 0 \\
        				\hline
        				f_{21,i} & f_{22,i} & f_{23,i}\\
        				f_{13,i} & 0 & 0
        			\end{array}\right]=\\
        			&=\left[
        			\renewcommand\arraystretch{1.6}
        			\begin{array}{c|cc}
        				0 & \frac{1}{C_t} & 0 \\
        				\hline
        				\frac{(k_{i,1}-1)}{L_{ti}} & \frac{(k_{i,2}-R_{ti})}{L_{ti}} & \frac{k_{i,3}}{L_{ti}}\\
        				-1 & 0 & 0
        			\end{array}\right]=\left[
        			\renewcommand\arraystretch{1.5} \begin{array}{c|cc}
        				0 & \FF_{12,i} \\
        				\hline
        				\FF_{21,i}& \FF_{22,i}
        			\end{array}\right].
        	\end{aligned}
        \end{equation}
        From Lyapunov theory, asymptotic stability of \eqref{eq:modelDGUgen-aug-closed} is equivalent to the existence of a Lyapunov function $\VV_i(\subss \hx i)=\subss \hx i ^T P_i \subss \hx i$ where  $P_{i}\in\Rset^{3\times3}$, $P = P^T>0$ and 
        \begin{equation}
        	\label{eq:Lyapeqnith}
        	Q_i = F_{i}^T
        	P_{i}+P_{i}F_{i}
        \end{equation}
        is negative definite. 
        In presence of nonzero coupling terms, we will show that asymptotic stability can be achieved under two additional conditions.
        The first one is the use of the following separable Lyapunov
        functions $$\VV_i(\hat{x}_{[i]})=
        \eta_i\hat{x}_{[i],1}^2 + [\hat{x}_{[i],2}\text{
        }\hat{x}_{[i],3}]\PP_{22,i}[\hat{x}_{[i],2}\text{ }\hat{x}_{[i],3}]^T,$$
        where
        \begin{equation}
        	\label{eqn:P22}
        	\PP_{22,i}=
        	\left[ \begin{array}{cc}
        		p_{22,i} & p_{23,i}  \\
        		p_{23,i} & p_{33,i}
        	\end{array}\right] .
        \end{equation}
        This requirement is summarized in the next assumption.
        \begin{assum} 
        	\label{ass:ctrl}
        	Gains $K_{i}$, $i\in\DD$
        	are designed such that, in \eqref{eq:Lyapeqnith}, 
        	the positive definite matrix $P_i$ has the structure
        	\begin{equation}
        		\label{eq:pstruct}
        		P_{i}=\left[
        		\renewcommand\arraystretch{1.5}
        		\begin{array}{c|c}
        			\eta_i & \mbf{0}_{1\times 2} \\
        			\hline
        			\mbf{0}_{2\times 1}  & \PP_{22,i}\\
        		\end{array}\right],
        	\end{equation}
        	where the entries of $\PP_{22,i}$ are arbitrary and $\eta_i>0$ is a local parameter.
        \end{assum}
        The second condition concerns the values of parameters $\eta_i$. 
        \begin{assum}
        	\label{ass:equal_ratio}
        	Given a constant $\bar\sigma>0$ (a parameter common to
        	all DGUs), parameters $\eta_i$ in \eqref{eq:pstruct} are given by
        	\begin{equation}
        		\label{eq:equal_ratio}
        		\eta_i = \bar\sigma C_{ti}
        		\hspace{7mm} \forall i\in\DD.
        	\end{equation}
        \end{assum}
        
        The next result shows that, under Assumption~\ref{ass:ctrl}, Lyapunov theory certifies, at most,  marginal
        stability of \eqref{eq:modelDGUgen-aug-closed}.
\begin{prop}
\label{pr:prop_1}
Under Assumption \ref{ass:ctrl}, the matrix $Q_i$ cannot be negative
definite. Moreover, if 
\begin{equation}
\label{eq:Qi_semidef}
Q_i \leq 0,
\end{equation} 
then $Q_i$ has the following structure:
\begin{equation}
Q_i= \label{eq:qstruct}
                     \left[ \begin{array}{c|cc}
                         0 & 0 & 0 \\
                         \hline
                         0 & q_{22,i} & q_{23,i}\\
                         0 & q_{23,i} & q_{33,i}\\
                         \end{array}\right] = \left[ \begin{array}{c|cc}
                         0 & \mbf{0}_{1\times 2}  \\
                         \hline
                         \mbf{0}_{2\times 1} & Q_{22,i}
                         \end{array}\right] .
\end{equation}
\end{prop}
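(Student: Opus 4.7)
The plan is to compute $Q_i$ directly using the block structures of $F_i$ (from \eqref{eq:Fi}) and $P_i$ (from \eqref{eq:pstruct}), and then read off both claims from the form of the top-left $1\times 1$ entry.

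First, I would write $F_i$ in the $1{+}2$ block form $F_i=\left[\begin{smallmatrix}0 & \FF_{12,i}\\ \FF_{21,i} & \FF_{22,i}\end{smallmatrix}\right]$ already displayed in \eqref{eq:Fi}, and $P_i$ in the matching block-diagonal form $P_i=\mathrm{blkdiag}(\eta_i,\PP_{22,i})$ from \eqref{eq:pstruct}. A direct multiplication then gives
\begin{equation*}
Q_i=F_i^T P_i+P_i F_i=\left[\begin{array}{c|c} 0 & \eta_i \FF_{12,i}+\FF_{21,i}^T\PP_{22,i}\\ \hline \eta_i \FF_{12,i}^T+\PP_{22,i}\FF_{21,i} & \PP_{22,i}\FF_{22,i}+\FF_{22,i}^T\PP_{22,i}\end{array}\right].
\end{equation*}
The crucial observation is that the $(1,1)$ entry of $Q_i$ is exactly $0$, because the $(1,1)$ block of $F_i$ is $0$ and $P_i$ is block-diagonal with respect to the same partition, so no cross term involving $\eta_i$ and $\FF_{12,i}$ or $\FF_{21,i}$ can ever appear on the diagonal.

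From $[Q_i]_{11}=0$, the first claim is immediate: $e_1^T Q_i e_1=0$, so $Q_i$ fails to be negative definite no matter how $K_i$ and $\PP_{22,i}$ are chosen (subject to Assumption~\ref{ass:ctrl}). For the second claim, suppose $Q_i\le 0$, i.e.\ $-Q_i\ge 0$. I would then invoke the standard fact that in a positive semidefinite matrix $M$, a zero diagonal entry forces the entire corresponding row and column to vanish: indeed, for any $j$, the principal $2\times 2$ submatrix $\left[\begin{smallmatrix} M_{ii} & M_{ij}\\ M_{ij} & M_{jj}\end{smallmatrix}\right]$ must be PSD, and $M_{ii}=0$ combined with $M_{ii}M_{jj}-M_{ij}^2\ge 0$ yields $M_{ij}=0$. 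Applying this to $-Q_i$ with $i=1$ shows that the first row and first column of $Q_i$ are identically zero, which is precisely the structure \eqref{eq:qstruct}, with $Q_{22,i}=\PP_{22,i}\FF_{22,i}+\FF_{22,i}^T\PP_{22,i}$.

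No step here is expected to be an obstacle: the argument is essentially a one-line block computation followed by a classical PSD lemma. The only mildly subtle point is the alignment of the block partitions of $F_i$ and $P_i$, which is exactly the reason Assumption~\ref{ass:ctrl} imposes the particular sparsity pattern \eqref{eq:pstruct} on $P_i$; this alignment is what forces $[Q_i]_{11}=0$ and will be the pivot on which later arguments (adding coupling terms $\subss{\hat\xi}{i}$ and invoking LaSalle) rest.
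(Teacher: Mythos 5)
Your proposal is correct and follows essentially the same route as the paper: a direct block computation using \eqref{eq:Fi} and \eqref{eq:pstruct} to get $[Q_i]_{11}=0$ (hence no negative definiteness), followed by the standard fact that a zero diagonal entry of a semidefinite matrix annihilates its row and column. The only difference is cosmetic: you justify that fact via the nonnegativity of the $2\times 2$ principal minors of $-Q_i$, whereas the paper argues that a vector attaining $x^TQ_ix=0$ maximizes the quadratic form and so must satisfy $Q_ix=0$ — both are valid, and yours is arguably the more elementary verification.
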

\begin{proof}
By direct computation, from \eqref{eq:Fi} and \eqref{eq:pstruct} one has $q_{11,i}=0$, showing that $Q_i$ cannot
be negative definite, as its first minor is not positive. Moreover, if a negative semidefinite matrix has
a zero element on its diagonal, then the corresponding row and column
have zero entries. This basic property can be shown as follows. If $\QQ\in\Rset^{n\times n}$ is symmetric and negative semidefinite, then $x^T\QQ x\leq 0$, $\forall x\in\Rset^n$. Partitioning $\QQ$ and $x$ as
	$$
	\label{eq:Q_partitioned}
	\renewcommand\arraystretch{2}
	\QQ=\left[ \begin{array}{c|cc}
	\QQ_{11} & \widetilde{\QQ}^T \\
	\hline
	\widetilde{\QQ} & \widehat{\QQ}
	\end{array}\right]
	\hspace{5mm}x=\left[ \begin{array}{c}
	x_{11}\\
	\tilde{x}
	\end{array}\right],$$
	one obtains
	$$ x^T\QQ x = x_{11}^2\QQ_{11} + 2 x_{11}\widetilde{\QQ}^T\tilde{x}+\tilde{x}^T\widehat{\QQ}\tilde{x}.$$
	Without loss of generality, assume $\QQ_{11}= 0$. Then, for any $x$ with $\tilde x = 0$ and $x_{11}\neq 0$, one has $x^T\QQ x = 0$, i.e. $x$ is a maximizer of $x^T\QQ x.$ Consequently, it must hold $\frac{\mathrm{d}}{\mathrm{dt}}\left(x^T\QQ x\right)= 0$, i.e. $2\QQ x = 0$, yielding
	\begin{equation*}
	\left\lbrace
	\begin{aligned}
	\QQ_{11}x_{11}+\widetilde{\QQ}^T\tilde{x} &=0\\
	\widetilde{\QQ}x_{11}+\widehat{\QQ}\tilde{x} &=0\\
	\end{aligned}
	\right.
	\end{equation*}
	Using $\QQ_{11}=0$ and $\tilde{x} = 0$, the previous linear system reduces to $\widetilde{\QQ}x_{11}=0$, that implies $\widetilde{\QQ}=0$.
Concluding, \eqref{eq:Qi_semidef} implies \eqref{eq:qstruct}. 
\end{proof}
           
                    Consider now the overall closed-loop 
            mG
            model 
            \begin{equation}
            	\label{eq:sysaugoverallclosed}
            	\left\lbrace
            	\begin{aligned}
            		\mbf{\dot{\hat{x}}}(t) &= (\mbf{\hat{A}+\hat{B}K})\mbf{\hat{x}}(t)+ \mbf{\hat{M}\hd}(t)\\
            		%\mbf{\hat{y}}(t)       &= \mbf{\hat{C}\hat{x}}(t)\\
            		\mbf{z}(t)       &= \mbf{\hat{H}\hat{x}}(t)
            	\end{aligned}
            	\right.
            \end{equation}
            obtained by combining \eqref{eq:sysaugoverall_1} and \eqref{eq:ctrldec}, with
            $\mbf{K}=\diag(K_{1},\dots,K_{N})$. Consider also the collective Lyapunov function
            \begin{equation}
            	\label{eq_coll_lyap}
            	\VV(\mbf{\hx})=\sum_{i=1}^N\VV_i(\hat{x}_{[i]})=\mbf{\hx}^T\mbf{P}\mbf{\hx}
            \end{equation}
            where $\mbf{P}=\diag(P_{1},\dots,P_{N})$. One has $\dot \VV(\mbf{\hx})=\mbf{\hx}^T\mbf{Q}\mbf{\hx}$ where
            \begin{equation}
            	\nonumber%label{eq:Lyapeqnoverall}
            	\mbf{Q} = (\mbf{\hat{A}}+\mbf{\hat{B}K})^T
            	\mbf{P}+\mbf{P}(\mbf{\hat{A}}+\mbf{\hat{B}K}).
            \end{equation}
            A consequence of Proposition \ref{pr:prop_1} is that, under Assumption \ref{ass:ctrl}, the matrix  $\mbf{Q}$
            cannot be negative definite. At most, one has
            \begin{equation}
            	\label{eq:Lyapeqnoverall}
            	\mbf{Q} \leq 0.
            \end{equation}
            Moreover, even if \eqref{eq:Qi_semidef} holds for all $i\in\DD$, the
            inequality \eqref{eq:Lyapeqnoverall} might be violated because of the nonzero
            coupling terms $\hat{A}_{ij}$ in matrix $\mbf{\hat A}$.
            The next result shows that this cannot happen under
            Assumption \ref{ass:equal_ratio}.
            % 
            % Thanks to the following assumption, we can ensure asymptotic
            %stability of the closed-loop QSL-mG. 
            % \begin{assum}
            %            \label{ass:equal_ratio}
            %            Given a constant ratio $\bar\sigma>0$ (a parameter common to
            %            all DGUs), parameters $\eta_i>0$ in \eqref{eq:pstruct}, $i \in\DD$ are chosen in
            %            order to ensure 
            %\begin{equation}
            %\label{eq:equal_ratio}
            %\frac{\eta_i}{C_{ti}} = \bar\sigma
            %            \hspace{7mm} \forall i\in\DD.
            %\end{equation}
            %          \end{assum}
            %\vspace{3mm}
            \begin{prop}
            	\label{pr:semidefinite_abc}
            	Under Assumptions \ref{ass:ctrl} and \ref{ass:equal_ratio}, if gains $K_i$ are computed in order to fulfill \eqref{eq:Qi_semidef} for all $i\in\DD$, then \eqref{eq:Lyapeqnoverall} holds.
            \end{prop}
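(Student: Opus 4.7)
The strategy is to decompose $\mathbf Q$ into a local part arising from the nominal closed-loop dynamics $F_i$ and a coupling part arising from the terms $\hat A_{ij}$, and then show that these two contributions live on mutually orthogonal subspaces of $\Rset^{3N}$, each of them negative semidefinite. Write $\mathbf{\hat A}+\mathbf{\hat B K}=\mathbf F+\mathbf\Xi$, where $\mathbf F=\diag(F_1,\dots,F_N)$ collects the nominal dynamics and $\mathbf\Xi$ gathers the couplings $\hat A_{ij}$ together with the compensating diagonal terms $-\sum_{j\in\NN_i}\hat A_{ij}$, so that
\begin{equation*}
\mathbf Q \;=\; \underbrace{\mathbf F^T\mathbf P+\mathbf P\mathbf F}_{\mathbf Q_d} \;+\; \underbrace{\mathbf\Xi^T\mathbf P+\mathbf P\mathbf\Xi}_{\mathbf Q_c}.
\end{equation*}

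By hypothesis and Proposition~\ref{pr:prop_1}, $\mathbf Q_d=\diag(Q_1,\dots,Q_N)\le 0$, and the structure \eqref{eq:qstruct} of each $Q_i$ forces its first row and column to vanish. In particular, if $\Sset_v=\mathrm{span}\{\mathbf e_{3i-2}:i\in\DD\}\subset\Rset^{3N}$ denotes the span of the PCC-voltage coordinates, then $\mathbf Q_d$ annihilates $\Sset_v$ both as a left- and right-multiplier. Next, the matrix $\hat A_{ij}$ has its unique nonzero entry $1/(R_{ij}C_{ti})$ at position $(1,1)$, and by Assumption~\ref{ass:ctrl} the first row and column of $P_i$ are zero except for $(P_i)_{11}=\eta_i$. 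Therefore $P_i\hat A_{ij}$ has a single nonzero entry $\eta_i/(R_{ij}C_{ti})$ at $(1,1)$, and invoking Assumption~\ref{ass:equal_ratio} collapses it to $\bar\sigma/R_{ij}$ --- a quantity symmetric in $i,j$. Summing over $j\in\NN_i$ and accounting for the diagonal compensation, one checks that the only nonzero entries of $\mathbf P\mathbf\Xi$ lie at the voltage positions, and the corresponding $N\times N$ principal submatrix equals $-\bar\sigma L$, with $L$ the Laplacian of $\GG_{el}$ weighted by $1/R_{ij}>0$. By symmetry of $L$, $\mathbf Q_c$ reduces to $-2\bar\sigma L$ when restricted to $\Sset_v$ and vanishes on $\Sset_v^\perp$; since $L\ge 0$, one gets $\mathbf Q_c\le 0$.

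To conclude, decompose any $\mathbf{\hx}\in\Rset^{3N}$ uniquely as $\mathbf{\hx}=\mathbf v+\mathbf w$ with $\mathbf v\in\Sset_v$, $\mathbf w\in\Sset_v^\perp$, using $\Rset^{3N}=\Sset_v\oplus\Sset_v^\perp$. Since the supports of $\mathbf Q_d$ and $\mathbf Q_c$ are respectively $\Sset_v^\perp$ and $\Sset_v$, all cross-terms vanish and
\begin{equation*}
\mathbf{\hx}^T\mathbf Q\mathbf{\hx}\;=\;\mathbf w^T\mathbf Q_d\mathbf w+\mathbf v^T\mathbf Q_c\mathbf v\;\le\;0,
\end{equation*}
which is \eqref{eq:Lyapeqnoverall}. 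The delicate step is the second one: it is precisely the scaling $\eta_i=\bar\sigma C_{ti}$ prescribed by Assumption~\ref{ass:equal_ratio} that cancels the asymmetric factor $1/C_{ti}$ hidden inside $\hat A_{ij}$ and uncovers the symmetric, positive-semidefinite Laplacian $L$. Any other assignment of the $\eta_i$ would leave $\mathbf P\mathbf\Xi+\mathbf\Xi^T\mathbf P$ non-symmetric on $\Sset_v$, and no analogous Laplacian-based bound would be available.
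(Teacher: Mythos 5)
Your proof is correct and takes essentially the same route as the paper: you split $\mathbf{Q}$ into the block-diagonal part $\diag(Q_1,\dots,Q_N)\le 0$ and a coupling part whose only nonzero entries sit at the voltage coordinates and which, precisely because $\eta_i=\bar\sigma C_{ti}$ makes $\eta_i/(R_{ij}C_{ti})=\bar\sigma/R_{ij}$ symmetric in $(i,j)$, equals $-2\bar\sigma$ times a weighted Laplacian of $\GG_{el}$ and is therefore negative semidefinite. The closing orthogonal-decomposition of $\Rset^{3N}$ is a cosmetic variation (the sum of two negative semidefinite matrices is negative semidefinite in any case), so the argument coincides with the paper's Laplacian-based proof.
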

            \begin{proof}
            	Consider the following decomposition of matrix $\mbf{\hat
            		A}$
            	\begin{equation}
            		\label{eq:decomposition}
            		\mbf{\hat A} = \mbf{\hat A_D}+\mbf{\hat A_{\Xi}} + \mbf{\hat A_C},
            	\end{equation}
            	where $\mbf{\hat{A}_{D}}=\diag(\hat{A}_{ii},\dots,\hat{A}_{NN})$
            	collects the local dynamics only, while $\mbf{\hat{A}_{\Xi}}=\diag(\hat{A}_{\xi 1},\dots,\hat{A}_{\xi
            		N})$ with 
            	\begin{equation*}
            		\renewcommand\arraystretch{1.5}
            		\hat{A}_{\xi i}=\begin{bmatrix}
            			-\sum\limits_{j \in \NN_i}\frac{1}{R_{ij}C_{ti}} &0&0\\
            			0 & 0 &0\\
            			0 & 0 &0
            		\end{bmatrix},
            	\end{equation*} 
            	takes into account the dependence of each local state on the neighboring DGUs.
            	We want to prove \eqref{eq:Lyapeqnoverall}, that, according to the
            	decomposition \eqref{eq:decomposition}, is equivalent to show that
            	\begin{small}
            		\begin{equation}
            			\label{eq:Lyap_abc}
            			\underbrace{\mbf{(\hat{A}_{D}+\hat{B}K)^TP+
            					P(\hat{A}_{D}+\hat{B}K)+}}_{({a})}\underbrace{\mbf{2\hat{A}_{\Xi}P}}_{(b)}+\underbrace{\mbf{\hat{A}_{C}}^T
            				\mbf{P+P\hat{A}_{C}}}_{(c)}\leq 0.
            		\end{equation}
            	\end{small}
            	By means of \eqref{eq:Qi_semidef}, matrix $(a) =
            	\mathrm{diag}(Q_1,\dots,Q_N)$ is negative semidefinite. Now, let us
            	study the contribution of $(b)+(c)$ in \eqref{eq:Lyap_abc}. Matrix $(b)$,
            	by construction, is block
            	diagonal and collects on its diagonal blocks in the form
            	\begin{equation}
            		\label{eq:element_of_b}
            			\begin{aligned}
            				2{\hat{A}}_{\xi i}P_{i}&=
            				\renewcommand\arraystretch{2}
            				\begin{bmatrix}
            					-2\sum\limits_{j\in\NN_i}\frac{1 }{R_{ij}C_{ti}} & 0 & 0& \\
            					0 & 0 & 0 \\
            					0 & 0 & 0\\
            				\end{bmatrix}
            				\left[ \begin{array}{c|c}
            					\eta_i & \mbf{0}_{1\times 2}  \\
            					\hline
            					\mbf{0}_{2\times 1} & \PP_{22,i}
            				\end{array}\right]=                   \\
            				&=    \begin{bmatrix}
            					-2\sum\limits_{j\in\NN_i}\tilde\eta_{ij} & 0 & 0 \\
            					0 & 0 & 0 \\
            					0 & 0 & 0\\
            				\end{bmatrix},
            			\end{aligned}
            	\end{equation}
            	where 
            	\begin{equation}
            		\label{eq:etaij}
            		\tilde\eta_{ij} = \frac{\eta_i }{R_{ij}C_{ti}}.
            	\end{equation}
            	As regards matrix $(c)$, we have that each the block in
            	position $(i,j)$ is equal to  
            	\begin{displaymath}
            		\left\{ \begin{array}{ll}
            			P_i\hat{A}_{ij}+\hat{A}_{ji}^TP_j & \hspace{7mm}\mbox{if } j\in\mathcal{N}_i \\
            			0 & \hspace{7mm}\mbox{otherwise}
            		\end{array} \right.
            	\end{displaymath}
            	where
            	\begin{equation}
            		\label{eq:element_of_c}
            			\begin{aligned}
            				P_i\hat{A}_{ij}+\hat{A}_{ji}^TP_j &=
            				\renewcommand\arraystretch{2}
            				\begin{bmatrix}
            					\tilde\eta_{ij}+\tilde\eta_{ji} & 0 & 0 \\
            					0 & 0 & 0 \\
            					0 & 0 & 0\\
            				\end{bmatrix}.
            			\end{aligned}
            	\end{equation}
            	From \eqref{eq:element_of_b} and \eqref{eq:element_of_c}, we notice
            	that only the elements in position $(1,1)$ of each $3\times 3$
            	block of $(b)+(c)$ can be different from zero. Hence, in order to
            	evaluate the positive/negative definiteness of the $3N\times 3N$
            	matrix $(b)+(c)$, we can equivalently consider the $N\times N$ matrix
            		\begin{equation} 
            			\label{eq:laplacian}
            			\LL = \left[\begin{array}{cccc}
            				-2\sum\limits_{j\in\NN_1}\tilde\eta_{1j} &\bar\eta_{12} & \dots  & \bar\eta_{1N} \\
            				\bar\eta_{21} & \ddots &\ddots & \vdots \\
            				\vdots &\ddots & -2\sum\limits_{j\in\NN_{N-1}}\tilde\eta_{N-1j} & \bar\eta_{N-1N}  \\
            				\bar\eta_{N1}  & \dots  & \bar\eta_{NN-1} & -2\sum\limits_{j\in\NN_N}\tilde\eta_{Nj} 
            			\end{array}
            			\right],
            		\end{equation}
            	obtained by deleting the second and third row and column in each block
            	of $(b)+(c)$. One has $\LL = \MM+\GG$, where
            		\begin{equation*}
            			\mathcal{M}= \begin{bmatrix}
            				-2\sum\limits_{j\in\NN_1}\tilde\eta_{1j} & 0 &\dots &
            				0\\
            				0 & -2\sum\limits_{j\in\NN_2}\tilde\eta_{2j} & \ddots  & \vdots\\
            				\vdots &\ddots& \ddots &0 \\
            				% \vdots & \vdots &  \ddots & \vdots\\
            				0 & \dots & 0 & -2\sum\limits_{j\in\NN_N}\tilde\eta_{Nj} 
            			\end{bmatrix}
            		\end{equation*}
            	and
            		\begin{equation}
            			\label{eq:GG}
            			\mathcal{G}=\left[\begin{array}{cccc}
            				0&\bar\eta_{12} & \dots  & \bar\eta_{1N} \\
            				\bar\eta_{21} & 0 & \ddots  & \vdots \\
            				\vdots &\ddots & \ddots  & \bar\eta_{N-1N}  \\
            				% \vdots & \vdots & \vdots & \ddots & \vdots\\
            				\bar\eta_{N1}  & \dots & \bar\eta_{N N-1}  & 0
            			\end{array}
            			\right].
            		\end{equation}
            	Notice that each off-diagonal element $\bar\eta_{ij}$ in \eqref{eq:GG} is equal to 
            	\begin{equation}
            		\label{eq:bar_eta}
            		\bar{\eta}_{ij} =\left\{ \begin{array}{ll}
            			(\tilde\eta_{ij}+\tilde\eta_{ji})& \hspace{7mm}\mbox{if } j\in\mathcal{N}_i \\
            			0 & \hspace{7mm}\mbox{otherwise}
            		\end{array} \right.
            	\end{equation}
            	% \begin{displaymath}
            	% \bar\eta_{ij} = \left\{ \begin{array}{ll}
            	%  (\tilde\eta_{ij}+\tilde\eta_{ji})& \hspace{7mm}\mbox{if } j\in\mathcal{N}_i \\
            	%  0 & \hspace{7mm}\mbox{otherwise}.
            	%   \end{array} \right.
            	% \end{displaymath}
            	At this point, from Assumption \ref{ass:equal_ratio}, one obtains that
            	$\tilde\eta_{ij}=\tilde\eta_{ji}$ (see \eqref{eq:etaij}) and,
            	consequently, $\bar\eta_{ij} = \bar\eta_{ji}=2\tilde\eta_{ij}$ (see \eqref{eq:bar_eta}). Hence, $\LL$ is
            	symmetric and has non negative off-diagonal elements and zero row and
            	column sum. It follows that $-\LL$ is a Laplacian matrix
            	%\cite{grone1990laplacian,agaev2005spectra}.
            	\cite{grone1990laplacian,godsil2001algebraic}. As such, it verifies $\LL
            	\leq 0$ by construction. Concluding, we have shown that
            	\eqref{eq:Lyap_abc} holds.
            \end{proof}
            The proof of Proposition \ref{pr:semidefinite_abc} reveals that, under
            Assumption \ref{ass:equal_ratio}, interactions between local Lyapunov functions
            $\VV_i(\hat{x}_{[i]})$ due to terms $\hat{A}_{ij}$, $i\neq j$, take
            the form of a weighted Laplacian matrix \cite{grone1990laplacian,godsil2001algebraic} associated to the graph $\GG_{el}$. Furthermore,
            differently from the idea in \cite{Tucci2015a,Tucci2015dc} of nullifying interactions by
            choosing $\eta_i>0$ in \eqref{eq:pstruct} sufficiently small, here \eqref{eq:Lyapeqnoverall} holds
            true even if parameters $\eta_i$ are large.
            
            The next goal is to show asymptotic stability of the mG using the marginal
            stability result in Proposition~\ref{pr:semidefinite_abc} together
            with LaSalle invariance theorem. To this purpose, in Proposition~\ref{pr:quadr_form} we first characterize stationary points for the Lyapunov energy $\VV(\mbf{\hx})$. The main result is then given in Theorem~\ref{thm:overall_stability}.
            This will be done in Theorem \ref{thm:overall_stability},
            that will rely on the
            next Propositions characterizing states $\mathbf{\hat x}$
            yielding $\dot{\mathcal{V}}(\mathbf{\hat x})=0$.
            \begin{prop}
            	\label{prp:ker_F22}
            	Let Assumptions \ref{ass:ctrl} and
            		\ref{ass:equal_ratio} hold and let us define $h_i(v_i) = v_i^T\QQ_{22,i}v_i$, with
            		$v_i\in\mathbb{R}^2$. If \eqref{eq:Qi_semidef} is guaranteed, and $k_{3,i}\neq 0$, then
            		\begin{equation*}
            			h_i(\bar{v}_i)  =0 \Longleftrightarrow\bar{v}_i\in\mathrm{Ker}(\FF_{22,i}).
            	\end{equation*}
            \end{prop}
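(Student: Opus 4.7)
The plan is to recognize that $h_i(v)=v^T Q_{22,i}v$ factors as a product of two linear functionals on $\Rset^2$, and then to exploit the negative semidefinite hypothesis together with the nondegeneracy conditions on $\PP_{22,i}$ and $k_{3,i}$ to force those two functionals to be proportional.

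First, using the block partitions in \eqref{eq:Fi} and \eqref{eq:pstruct}, I would compute
\begin{equation*}
Q_{22,i}=\FF_{22,i}^T\PP_{22,i}+\PP_{22,i}\FF_{22,i}.
\end{equation*}
Since the second row of $\FF_{22,i}$ is identically zero, expanding $v^T Q_{22,i}v$ for $v=(v_1,v_2)^T$ gives the factorization $h_i(v)=2\,\ell_1(v)\,\ell_2(v)$, with
\begin{equation*}
\ell_1(v)=p_{22,i}v_1+p_{23,i}v_2,\qquad \ell_2(v)=f_{22,i}v_1+f_{23,i}v_2.
\end{equation*}
By construction $\mathrm{Ker}(\FF_{22,i})=\{v:\ell_2(v)=0\}$, so the $(\Leftarrow)$ direction is immediate: any $\bar v_i\in\mathrm{Ker}(\FF_{22,i})$ makes $\ell_2(\bar v_i)=0$ and hence $h_i(\bar v_i)=0$.

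For the $(\Rightarrow)$ direction, assumption \eqref{eq:Qi_semidef} combined with Proposition~\ref{pr:prop_1} implies $Q_{22,i}\leq 0$, so $h_i(v)=2\,\ell_1(v)\,\ell_2(v)\leq 0$ for \emph{every} $v\in\Rset^2$. I would then observe that neither linear form is trivial: $\ell_1\not\equiv 0$ because $p_{22,i}>0$ is a diagonal entry of the positive definite matrix $\PP_{22,i}$, and $\ell_2\not\equiv 0$ because the hypothesis $k_{3,i}\neq 0$ gives $f_{23,i}=k_{3,i}/L_{ti}\neq 0$. The crucial step is then that the product of two nonzero linear forms on $\Rset^2$ cannot be globally nonpositive unless the forms are scalar multiples of each other; otherwise, linear independence of $\ell_1,\ell_2$ would supply an invertible change of coordinates in which $\ell_1\ell_2=u_1u_2$, which takes both signs. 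Hence $\ell_1=c\,\ell_2$ with $c\in\Rset$, and the sign condition forces $c<0$. Substituting back yields $h_i(v)=2c\,\ell_2(v)^2$, so $h_i(\bar v_i)=0$ if and only if $\ell_2(\bar v_i)=0$, i.e. $\bar v_i\in\mathrm{Ker}(\FF_{22,i})$.

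The main obstacle is the proportionality argument in the last paragraph; it is the only place that truly needs both hypotheses ($\PP_{22,i}>0$ from Assumption~\ref{ass:ctrl}, and $k_{3,i}\neq 0$), since dropping either one makes one of the linear forms vanish and breaks the equivalence. Everything else is a direct algebraic verification made possible by the very special block/rank-one structure of $\FF_{22,i}$.
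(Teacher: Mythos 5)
Your proof is correct, and it takes a genuinely different route from the paper's. You exploit the fact that the second row of $\FF_{22,i}$ vanishes to write $\QQ_{22,i}=\FF_{22,i}^T\PP_{22,i}+\PP_{22,i}\FF_{22,i}$ and factor $h_i(v)=2\bigl(p_{22,i}v_1+p_{23,i}v_2\bigr)\bigl(f_{22,i}v_1+f_{23,i}v_2\bigr)$; global nonpositivity of a product of two nontrivial linear forms on $\Rset^2$ (with $p_{22,i}>0$ from $\PP_{22,i}>0$ and $f_{23,i}=k_{3,i}/L_{ti}\neq 0$) then forces the forms to be negatively proportional, so $h_i=2c\,\ell_2^2$ with $c<0$ and the zero set of $h_i$ is exactly $\mathrm{Ker}(\FF_{22,i})$. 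The paper instead argues via optimality: since $\QQ_{22,i}\le 0$, any zero of $h_i$ is a maximizer, hence satisfies $\QQ_{22,i}\bar v=0$; it then decomposes $\Rset^2=\mathrm{Im}(\FF_{22,i}^T)\oplus\mathrm{Ker}(\FF_{22,i})$, parametrizes $\mathrm{Im}(\FF_{22,i}^T)$ along $[f_{22,i}\ f_{23,i}]^T$, and reaches a contradiction through explicit equations, which requires a separate argument (using $Q_i\le 0$, the structure \eqref{eq:qstruct} and $k_{3,i}\neq 0$) to rule out $f_{22,i}=0$. Your factorization subsumes that case analysis automatically — if $f_{22,i}=0$ the two forms could not be proportional, contradicting semidefiniteness — and it isolates exactly where the two hypotheses enter. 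A pleasant byproduct is that the proportionality $(p_{22,i},p_{23,i})=c\,(f_{22,i},f_{23,i})$ is precisely the relation \eqref{eq:conditionp22} that the paper re-derives later in the proof of Theorem~\ref{thm:overall_stability}, so your argument delivers it for free; the paper's approach, on the other hand, is the one that generalizes more directly when the coupling block is not rank one.
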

            \begin{proof}
            	For the sake of simplicity, in the sequel we omit the subscript
            		$i$.
            		We start by proving that
            		\begin{equation}
            			\label{eqn:left}
            			\bar{v}\in\text{Ker}(\FF_{22})\Longrightarrow
            			h(\bar{v})  =0.
            		\end{equation} To this aim, we first replace \eqref{eq:Fi} and \eqref{eq:pstruct}
            		in \eqref{eq:Lyapeqnith}, thus
            		obtaining 
            		\begin{equation}
            			\label{eqn:Q_22}
            			\QQ_{22} =\FF_{22}^T\PP_{22}+\PP_{22}\FF_{22}.
            		\end{equation} 
            		Then, we write
            		\begin{equation*}
            			%\begin{aligned}
            			h(\bar{v}) =\bar{v}^T\QQ_{22}\bar{v} = 
            			%&=\bar{v}^T\FF_{22}^T\PP_{22}\bar{v}+\bar{v}^T\PP_{22}\FF_{22}\bar{v} = 
            			2\bar{v}^T\PP_{22}\underbrace{\FF_{22}\bar{v}}_{=0} = 0.
            			%\end{aligned}
            	\end{equation*}
            	Next, we show that 
            		\begin{equation}
            			\label{eqn:right}
            			h(\bar{v})  =0 \Longrightarrow
            			\bar{v}\in\text{Ker}(\FF_{22}).
            		\end{equation}
            		We start by reformulating the condition $h(\bar{v})  =0$ in \eqref{eqn:right}. In
            		particular, from basic linear algebra, we have the following
            		orthogonal decomposition induced by $\FF_{22}$: $\mathbb{R}^2 = \text{Im}(\FF_{22}^T) \oplus \text{Ker}(\FF_{22})$,
            		which allows us to write any vector $v\in\mathbb{R}^2$ as
            		\begin{equation}
            			\label{eqn:decomposition}
            			v = \hat{v}+\tilde{v}, \hspace{4mm}\hat{v}\in \text{Im}(\FF_{22}^T), \tilde{v}\in \text{Ker}(\FF_{22}).
            		\end{equation}
            		Since we are assuming that $Q$ is negative semidefinite and
            		structured as in \eqref{eq:qstruct}, vectors $\bar v$ satisfying
            		$h(\bar{v})  =0$ also maximize $h(\cdot)$. Hence, 
            		\begin{equation}
            			\label{eqn:maximum}
            			h(\bar{v})  =0 \Longleftrightarrow\frac{\text{d}h}{\text{d}v}(\bar v) = \QQ_{22}\bar v = 0,
            		\end{equation}
            		which,
            		decomposing $\bar v$ as in \eqref{eqn:decomposition}, provides
            		\begin{equation}
            			\label{eqn:star_formulation}
            			h(\bar{v})  =0 \Longleftrightarrow \QQ_{22}\hat{v}+ \underbrace{\QQ_{22}\tilde{v}}_{=0} =0.
            		\end{equation}
            		Notice that $\QQ_{22}\tilde{v}=0$ in \eqref{eqn:star_formulation} follows
            		from the fact that $\tilde{v}\in \text{Ker}(\FF_{22})$. In particular,
            		from \eqref{eqn:left}, we know that $h(\tilde v) = 0$, and hence condition
            		\eqref{eqn:maximum} must hold for $\bar v = \tilde v$.
            		At this point, using
            		\eqref{eqn:star_formulation}, we can
            		rewrite \eqref{eqn:right} as  
            		\begin{equation*}
            			\QQ_{22}\hat{v}=0 \Longrightarrow \bar v \in \text{Ker}(\FF_{22}),
            		\end{equation*}
            		which, since $\bar{v}\in\text{Ker}(\FF_{22}) \Longleftrightarrow \hat v
            		= 0$, finally becomes
            		\begin{equation}
            			\label{eqn:final_reformulation}
            			\QQ_{22}\hat{v}=0 \Longrightarrow \hat v =0.
            		\end{equation}
            		In summary, we have shown that, in order to prove \eqref{eqn:right}, one can equivalently demonstrate \eqref{eqn:final_reformulation}. To this aim, we parametrize $\hat{v}\in\text{Im}(\FF_{22}^T)$ as
            		\begin{equation*}
            			\text{Im}(\FF_{22}^T)=\{\FF_{22}^T
            			\begin{bmatrix}
            				y_1\\
            				y_2 \\
            			\end{bmatrix}
            			, y_1, y_2\in\mathbb{R}\}
            		\end{equation*}
            		which, recalling \eqref{eq:Fi}, becomes
            		\begin{equation*}
            			\text{Im}(\FF_{22}^T)=\left\lbrace\begin{bmatrix}
            				f_{22} & 0\\
            				f_{23} & 0\\
            			\end{bmatrix}
            			\begin{bmatrix}
            				y_1\\
            				y_2 \\
            			\end{bmatrix}
            			, y_1, y_2\in\mathbb{R}\right\rbrace=\left\lbrace
            			\begin{bmatrix}
            				f_{22}\\
            				f_{23} \\
            			\end{bmatrix}y_1, y_1\in\mathbb{R}\right\rbrace.
            		\end{equation*}
            		Hence, we rewrite $\QQ_{22}\hat{v}=0$ in \eqref{eqn:final_reformulation} as $\QQ_{22}[f_{22}\text{ }f_{23}]^{T}y_1=0$, that, by means of \eqref{eqn:Q_22}, implies
            		\begin{equation}
            			\label{eqn:sys_parametrization}
            			\begin{small}\PP_{22}\FF_{22}
            				\begin{bmatrix}
            					f_{22}  \vspace{2mm}\\
            					f_{23} \\
            				\end{bmatrix}y_1
            				=-\FF_{22}^T\PP_{22}
            				\begin{bmatrix}
            					f_{22} \vspace{2mm}\\
            					f_{23} \\
            				\end{bmatrix}y_1.
            			\end{small}
            		\end{equation}
            		Replacing \eqref{eq:Fi} and \eqref{eqn:P22} in \eqref{eqn:sys_parametrization}, we get
            		\begin{equation}
            			\label{eq:sys_prop3}            
            			\left\lbrace
            			\begin{aligned}
            				p_{22}\left(f_{22}^2+f_{23}^2\right)y_1&=-f_{22}^2p_{22}y_1-f_{22}f_{23}p_{23}y_1 \\
            				p_{23}\left(f_{22}^2+f_{23}^2\right)y_1&=-f_{22}f_{23}p_{22}y_1+f_{23}^2p_{23}y_1  
            			\end{aligned}
            			\right.
            		\end{equation}
            		Notice that \eqref{eq:sys_prop3} is verified if $y_1 =0$ (i.e. if $\hat v = 0$). To conclude the proof, we just need to show that $y_1 =0$ is the only solution of \eqref{eq:sys_prop3}. To this purpose, we proceed by contradiction and assume that there exists a $y_1 \neq 0$ fulfilling \eqref{eq:sys_prop3}. This leads to
            		\begin{subequations}
            			\label{eq:sys2_prop3}            
            			\begin{empheq}[left=$\quad$\empheqlbrace]{align}
            				\label{eq:sysA_prop3}
            				p_{22}\left(2f_{22}^2+f_{23}^2\right)&=-f_{22}f_{23}p_{23} \\
            				\label{eq:sysB_prop3}
            				p_{23}\left(f_{22}^2+f_{23}^2\right)&=-f_{22}f_{23}p_{22}+f_{23}^2p_{23}  
            			\end{empheq}
            		\end{subequations}
            		Let us consider \eqref{eq:sysA_prop3} and let us assume, for the time being, 
            		\begin{equation}
            			\label{eqn:ass_f22_f23}
            			f_{22}\neq 0 \text{ and } f_{23}\neq 0.
            		\end{equation}
            		We will show later that these two conditions are satisfied if $k_3\neq 0$. Since it also holds $p_{22}>0$\footnote{Matrix $P$ is structured as in \eqref{eq:pstruct} and it is positive definite. Therefore, $\PP_{22}>0$, which implies that the first minor of $\PP_{22}$ (i.e. $p_{22}$) must be strictly positive.}, we have $-f_{22}f_{23}p_{23}>0$, which implies $p_{23}\neq 0$. Then, we derive
            		\begin{equation*}
            			p_{22}=\frac{f_{22}f_{23}p_{23}}{2f_{22}^2+f_{23}^2}
            		\end{equation*}
            		and replace it in \eqref{eq:sysB_prop3}, obtaining
            		\begin{equation}
            			\label{eqn:prop3_finale}
            			2p_{23}f_{23}^2+p_{23}f_{22}^2=\frac{f_{22}^2f_{23}^2p_{23}}{2f_{22}^2+f_{23}^2}.
            		\end{equation}
            		Finally, by direct calculation, \eqref{eqn:prop3_finale} amounts to
            		\begin{equation*}
            			4f_{22}^2f_{23}^2+2f_{23}^4+2f_{22}^4=0,
            		\end{equation*}
            		which is true if and only if $f_{22}=f_{23}=0$. However, from \eqref{eqn:ass_f22_f23}, these conditions are never verified.
            	
		            The last step is to show that \eqref{eqn:ass_f22_f23} holds. Recalling that electrical parameters are positive, one has $k_{3} \neq 0 \Longrightarrow f_{23} \neq 0$ (see \eqref{eq:Fi}). Moreover, $Q\leq 0$, implies $\QQ_{22}\leq 0$ (in fact, as the last row of $\FF_{22}$ is zero, $\QQ_{22}$ cannot be negative definite). Now, if $f_{22} = 0$ holds, we would have $q_{22}=0$, which implies $q_{23}=q_{32}=0$. However, by construction, we have $q_{23}=q_{32}=f_{23}p_{22} + f_{22}p_{23}$ (see \eqref{eq:qstruct}), which is never  zero if $f_{22,i} = 0$, because $f_{23} \neq 0$ and $p_{22} \neq 0$.
            \end{proof}
  
            \begin{prop}
            	\label{pr:quadr_form}
            	Let $g_i(w_i) = w_i^T Q_i w_i$. Under the same assumptions of Proposition \ref{prp:ker_F22}, $\forall i\in\DD$, only vectors $\bar w_i$ in the form
            	\begin{equation*}
            		\bar{w}_i =\left[ \begin{array}{ccc}
            			\alpha_i&
            			\beta_i  &
            			\delta_i\beta_i
            		\end{array}\right]^T
            	\end{equation*} with $\alpha_i$, $\beta_i\in\Rset$, and $\delta_i =
            	-\frac{k_{2,i}-R_{ti}}{k_{3,i}}$, fulfill
            	\begin{equation}
            		\label{eq:vQv=0}
            		g_i(\bar{w}_i) = \bar{w}_i^T Q_i \bar{w}_i=0.
            	\end{equation}
            \end{prop}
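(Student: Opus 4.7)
The plan is to leverage the block structure of $Q_i$ from Proposition~\ref{pr:prop_1} together with the kernel characterization already proved in Proposition~\ref{prp:ker_F22}; after that, only an elementary $2\times 2$ kernel computation remains. Writing $\bar w_i = [\alpha_i,\,v_i^T]^T$ with $v_i = [\beta_i,\gamma_i]^T\in\Rset^2$, and using that the first row and column of $Q_i$ vanish by \eqref{eq:qstruct}, a direct multiplication gives
$$g_i(\bar w_i) = \bar w_i^T Q_i \bar w_i \;=\; v_i^T Q_{22,i} v_i \;=\; h_i(v_i).$$
Hence $\alpha_i$ plays no role in $g_i$, and the condition $g_i(\bar w_i) = 0$ is equivalent to $h_i(v_i) = 0$.

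Next, since the hypotheses of Proposition~\ref{prp:ker_F22} are in force (in particular $k_{3,i}\neq 0$ and $Q_i\leq 0$), I would apply that proposition to obtain $h_i(v_i) = 0 \iff v_i \in \text{Ker}(\FF_{22,i})$. The kernel is then computed by hand from the explicit expression \eqref{eq:Fi},
$$\FF_{22,i} = \begin{bmatrix} (k_{2,i}-R_{ti})/L_{ti} & k_{3,i}/L_{ti} \\ 0 & 0 \end{bmatrix},$$
which reduces $\FF_{22,i}v_i = 0$ to the single scalar equation $f_{22,i}\beta_i + f_{23,i}\gamma_i = 0$. Because $k_{3,i}\neq 0$ ensures $f_{23,i}\neq 0$, solving for $\gamma_i$ yields
$$\gamma_i \;=\; -\frac{f_{22,i}}{f_{23,i}}\beta_i \;=\; -\frac{k_{2,i}-R_{ti}}{k_{3,i}}\beta_i \;=\; \delta_i\beta_i,$$
with $\beta_i\in\Rset$ free. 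Combining the freedom of $\alpha_i$ and $\beta_i$ with the forced value $\gamma_i = \delta_i\beta_i$ yields exactly the claimed parametrization $\bar w_i = [\alpha_i,\,\beta_i,\,\delta_i\beta_i]^T$, and the argument reversibly establishes both implications.

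\textbf{Expected difficulty.} I anticipate essentially no substantive obstacle at this stage: the hard algebraic content has been absorbed into Proposition~\ref{prp:ker_F22}, so what remains is block bookkeeping on $Q_i$ followed by reading off a one-dimensional kernel. The only delicate point is the reliance on $k_{3,i}\neq 0$, which is needed twice: once to license the invocation of Proposition~\ref{prp:ker_F22}, and once to justify the division $-f_{22,i}/f_{23,i}$ that defines $\delta_i$. Both uses are covered by the standing hypothesis of Proposition~\ref{prp:ker_F22}, so no additional structural assumption on the gains $K_i$ is required.
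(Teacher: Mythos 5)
Your proof is correct and follows essentially the same route as the paper: exploit the block structure \eqref{eq:qstruct} to reduce $g_i(\bar w_i)=0$ to $h_i(v_i)=0$ on the lower $2\times 2$ block, invoke Proposition~\ref{prp:ker_F22}, and then read off $\mathrm{Ker}(\FF_{22,i})$ explicitly using $f_{23,i}\neq 0$ (guaranteed by $k_{3,i}\neq 0$). Your streamlined treatment even subsumes the paper's separate handling of the $[\alpha_i,0,0]^T$ solutions, and your explicit form of $\FF_{22,i}$ is the correct one (the matrix displayed in the paper's kernel computation is transposed by a typo, though its stated parametrization agrees with yours).
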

            \begin{proof}
            	In the sequel, we omit the subscript
            	$i$. From \eqref{eq:qstruct}, $g(w)$ is equal to 
            	\begin{equation}
            		 \label{eq:quadr_form}
            		\left[ \begin{array}{c|c} 
            			w_1 & {w}_2^T
            		\end{array}
            		\right]
            		\left[
            		 \renewcommand\arraystretch{1.5} \begin{array}{c|c}
            			0 & \mbf{0}_{1\times 2}  \\
            			\hline
            			\mbf{0}_{2\times 1} & \QQ_{22}\\
            		\end{array}\right]\left[ \begin{array}{c}
            			w_1  \\
            			\hline
            			{w}_2
            		\end{array}\right],
            	\end{equation} 
            	where $ w_2\in\Rset^2$. Since $Q$ is
            	negative semidefinite, the 
            	vectors $\bar w$ satisfying \eqref{eq:vQv=0} also maximize $g(\cdot)$. Hence, it must hold $ \frac{\mathrm{d}g}{\mathrm{d}w}(\bar w)
            		= Q\bar w=0$, i.e.
            	\begin{equation}
            		\label{eq:maximum}
            		\left[ 
            		 \renewcommand\arraystretch{1.5}
            		 \begin{array}{c|cc}
            			0 &  \mbf{0}_{1\times 2}   \\
            			\hline
            			\mbf{0}_{2\times 1}  & \QQ_{22}\\
            		\end{array}\right]\left[ \begin{array}{c}
            			\bar{w}_1  \\
            			\hline
            			\bar{{w}}_2
            		\end{array}\right]=0.
            	\end{equation}
            	It is easy to show that, by direct calculation, a set of solutions
            	to \eqref{eq:vQv=0} and \eqref{eq:maximum} is composed of vectors in the form
            	\begin{equation}
            		\label{eq:first_solution}
            		\bar{w} = \left[ \begin{array}{ccc}
            			\alpha &
            			0&
            			0
            		\end{array}\right]^T ,\quad\alpha\in\Rset.
            	\end{equation}
            	Moreover, from \eqref{eq:quadr_form}, we have that \eqref{eq:vQv=0} is
            	also verified if there exist vectors 
            	\begin{equation}
            		\label{eq:tilde_v}
	            	\tilde w =\left[ \begin{array}{c|c}
            				w_1 &
            				{\underline{w}}_2^T
            			\end{array}\right]^T,\quad{\underline{w}}_2\neq [0\text{ }0]^T,
            	\end{equation}
            	such that $w_1\in\Rset$ and
            	\begin{equation}
            		\label{eq:second_solution}
            		{\underline{w}}_2^T
            			\QQ_{22} {\underline{w}}_2 =0.
            	\end{equation}
            	By exploiting the result of Proposition \ref{prp:ker_F22}, we know that vectors $\underline{w}_2$ fulfilling \eqref{eq:second_solution} belong to $\text{Ker}(F_{22})$, which, recalling \eqref{eq:Fi}, can be explicitly computed as follows
            		\begin{equation}
            			\label{eqn:Ker_F_22}
            			\begin{aligned}
            				\mathrm{Ker}(\FF_{22}) &=\left\lbrace x\in\Rset^2 : \left[ \begin{array}{cc}
            					f_{22}  & 0\\
            					f_{23}  & 0
            				\end{array}\right]x = 0\right\rbrace =\\
            				&=\left\lbrace x\in\Rset^2 : x = [\text{ } \beta\text{ }\delta\beta\text{ }]^T, \beta\in\Rset,\delta = -\frac{k_{2}-R_{t}}{k_{3}} \right\rbrace .
            			\end{aligned}
            	\end{equation}
            	The proof ends by merging \eqref{eq:first_solution} and \eqref{eq:tilde_v}, with ${\underline{w}}_2$ as in \eqref{eqn:Ker_F_22}.
            
            \end{proof}
 
            \begin{thm}
            	\label{thm:overall_stability}
            	If Assumptions \ref{ass:ctrl} and \ref{ass:equal_ratio} are fulfilled,
            	the graph $\GG_{el}$ is connected, \eqref{eq:Qi_semidef} holds and
            	$k_{3,i}\neq 0$, $\forall i\in\DD$, the origin of \eqref{eq:sysaugoverallclosed} is asymptotically stable.
            \end{thm}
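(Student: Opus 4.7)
The plan is to apply LaSalle's invariance principle using the collective Lyapunov function $\VV(\mbf{\hx})=\mbf{\hx}^{T}\mbf{P}\mbf{\hx}$ from \eqref{eq_coll_lyap}. Since $\mbf{P}=\diag(P_{1},\dots,P_{N})>0$, $\VV$ is positive definite and radially unbounded, while Proposition~\ref{pr:semidefinite_abc} already yields $\dot{\VV}(\mbf{\hx})=\mbf{\hx}^{T}\mbf{Q}\mbf{\hx}\leq 0$ along the unforced dynamics ($\mbf{\hd}=0$, which is the setting relevant for stability of the origin). Hence every trajectory is bounded and converges to the largest invariant set $\Omega$ contained in $E:=\{\mbf{\hx}:\mbf{\hx}^{T}\mbf{Q}\mbf{\hx}=0\}$, and it remains to show $\Omega=\{0\}$.

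To characterize $E$, I would re-use the decomposition $\mbf{Q}=(a)+[(b)+(c)]$ from the proof of Proposition~\ref{pr:semidefinite_abc}: the block-diagonal part $(a)=\diag(Q_{1},\dots,Q_{N})$ is negative semidefinite by \eqref{eq:Qi_semidef}, while $(b)+(c)$ acts only on the PCC-voltage components and, up to zero rows and columns, coincides with the matrix $\LL$ of \eqref{eq:laplacian}, which is itself non-positive because $-\LL$ is a weighted graph Laplacian. Since a sum of two non-positive quadratic forms vanishes iff each summand does, on $E$ one must have simultaneously $\subss{\hx}{i}^{T}Q_{i}\subss{\hx}{i}=0$ for every $i\in\DD$ and $[V_{1},\dots,V_{N}]\,\LL\,[V_{1},\dots,V_{N}]^{T}=0$. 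Proposition~\ref{pr:quadr_form} converts the former into $\subss{\hx}{i}=[\alpha_{i},\,\beta_{i},\,\delta_{i}\beta_{i}]^{T}$ with $\delta_{i}=-(k_{2,i}-R_{ti})/k_{3,i}$ (well defined because $k_{3,i}\neq 0$), and since $\GG_{el}$ is connected the null space of $-\LL$ equals $\mathrm{span}(\mbf{1}_{N})$, forcing $\alpha_{i}=\alpha$: the PCC voltages coincide on $E$.

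The key final step is to show that $E$ contains no non-trivial forward-invariant trajectory. On any solution remaining in $E$, the identities $V_{i}(t)=\alpha(t)$ and $v_{i}(t)=\delta_{i}I_{ti}(t)$ must hold for every $t$, so their time derivatives must be consistent with the closed-loop DGU equations obtained by substituting $u_{i}=K_{i}\subss{\hx}{i}$ into \eqref{eq:newDGU}--\eqref{eq:intdynamics} with $\mbf{d}=0$ and $\mbf{z_{ref}}=0$. Because all $V_{i}$ agree, the Laplacian coupling in $\dot V_{i}$ cancels and $\dot V_{i}=I_{ti}/C_{ti}$ must be common across DGUs, forcing $I_{ti}=C_{ti}\gamma(t)$. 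Substituting $v_{i}=\delta_{i}I_{ti}$ into the inductor equation, the terms proportional to $(k_{2,i}-R_{ti})$ and $k_{3,i}$ cancel by the very definition of $\delta_{i}$ and leave $\dot I_{ti}=(k_{1,i}-1)\alpha/L_{ti}$; matching this against the integrator equation $\dot v_{i}=\delta_{i}\dot I_{ti}=-\alpha$ yields, for every $i\in\DD$,
\begin{equation*}
\alpha\,[\,\delta_{i}(k_{1,i}-1)+L_{ti}\,]=0.
\end{equation*}
A short computation that combines \eqref{eq:pstruct}, the off-diagonal LMI necessary conditions $Q_{i,12}=Q_{i,13}=0$ (immediate from Proposition~\ref{pr:prop_1}) and the identification $\delta_{i}=-p_{22,i}/p_{23,i}$ furnished by Proposition~\ref{prp:ker_F22} then shows that the bracketed quantity equals $L_{ti}\bar\sigma/p_{23,i}$, which is non-zero because $\bar\sigma>0$ by Assumption~\ref{ass:equal_ratio}. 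Therefore $\alpha\equiv 0$, whence $\gamma=0$, $\beta_{i}=0$ and $v_{i}=0$ cascade; thus $\Omega=\{0\}$ and LaSalle's theorem \cite{khalil2001nonlinear} delivers asymptotic stability of the origin of \eqref{eq:sysaugoverallclosed}.

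The main obstacle is this last algebraic elimination. The conditions defining $E$ are only pointwise in time, so ruling out non-trivial trajectories that remain inside $E$ forever requires propagating the constraints $V_{i}=\alpha$ and $v_{i}=\delta_{i}I_{ti}$ through all three scalar ODEs of each DGU and then invoking the specific algebraic structure imposed by the LMI. The argument depends crucially on $k_{3,i}\neq 0$ (so that $\delta_{i}$ is defined and Proposition~\ref{pr:quadr_form} applies), on the connectivity of $\GG_{el}$ (so that the kernel of the coupling Laplacian collapses all $\alpha_{i}$ into one $\alpha$), and on $\bar\sigma>0$ from Assumption~\ref{ass:equal_ratio}; dropping any one of these would permit $\Omega$ to be strictly larger than $\{0\}$ and would give only marginal stability.
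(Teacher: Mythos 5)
Your proof is correct and follows essentially the same route as the paper: LaSalle with the block-diagonal $\VV$, the $(a)+(b)+(c)$ decomposition from Proposition~\ref{pr:semidefinite_abc}, Proposition~\ref{pr:quadr_form} plus connectedness of $\GG_{el}$ to characterize the set where $\dot\VV=0$, and then the structural identities $q_{12,i}=0$, $p_{22,i}=-\delta_i p_{23,i}$ and $\eta_i=\bar\sigma C_{ti}$ to exclude nonzero invariant trajectories. The only differences are cosmetic: you rule out the bad branch by computing $\delta_i(k_{1,i}-1)+L_{ti}=L_{ti}\bar\sigma/p_{23,i}\neq 0$ directly (the paper reaches the same contradiction by substituting $k_{1,i}$ from $q_{12,i}=0$), and you merge the paper's two-stage set-membership argument ($R\to S\to\{0\}$) into a single differentiation-along-the-trajectory cascade.
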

            \begin{proof}
            	From Proposition \ref{pr:semidefinite_abc}, $\dot\VV(\mbf{\hat x})$ is negative semidefinite
            	(i.e. \eqref{eq:Lyapeqnoverall} holds). 
            	We want to show that the origin of the mG is also attractive using the LaSalle
            	invariance Theorem \cite{khalil2001nonlinear}. For this purpose, we first compute the set
            	$R = \{\mbf{x}\in\Rset^{3N} : \mbf{x}^T \mbf{Q}\mbf{x}= 0 \}$, which,
            	by means of the decomposition in \eqref{eq:Lyap_abc}, coincides with
            	\begin{equation}
            		\label{eq:R}
            		\begin{aligned}
            			R &= \{\mbf{x} : \mbf{x}^T \left( (a)+(b)+(c)\right)\mbf{x}
            			= 0 \}\\
            			&=\{\mbf{x} : \mbf{x}^T (a) \mbf{x} +\mbf{x}^T(b)\mbf{x}+\mbf{x}^T(c) \mbf{x}
            			= 0 \}\\
            			&=\underbrace{\{\mbf{x} : \mbf{x}^T (a) \mbf{x} =0\}}_{X_1}\cap \underbrace{\{\mbf{x}:\mbf{x}^T\left[(b)+(c)\right] \mbf{x} =0\}}_{X_2} .
            		\end{aligned}
            	\end{equation}
            	In particular, the last equality follows from the fact that $(a)$ and $(b)+(c)$ are negative semidefinite matrices (see the proof of Proposition \ref{pr:semidefinite_abc}).
            	We first focus on the elements of set $X_2$. Since matrix $(b)+(c)$ can
            	be seen as an "expansion" of the Laplacian \eqref{eq:laplacian}, with zero entries
            	on the second and third row of each $3\times 3$ block, we have that,
            	by construction, vectors in the form
            	\begin{small}
            		\begin{equation}
            			\label{eq:X2_a}
            			\mbf{\tilde x} =\left[  \text{ }0\text{ }\tilde x_{12} \text{ }\tilde
            			x_{13} \text{ }|\text{ }\cdots \text{ }|\text{ }0\text{ }\tilde
            			x_{N2} \text{ }\tilde x_{N3} \text{ }\right]^T,\hspace{3mm}\tilde
            			x_{i2}, \tilde x_{i3}\in\Rset,\forall i\in\DD,
            		\end{equation}
            	\end{small}
            	\normalsize
            	belong to $X_2$.
            	Moreover, since the kernel of the Laplacian matrix of a connected graph
            	contains only vectors with identical entries \cite{godsil2001algebraic}, it also holds
            	\begin{equation}
            		\label{eq:X2_b}
            		\{\mbf{\bar{x}}=\bar{x}\left[  \text{ }1\text{ }0 \text{ } 0 \text{
            		}|\text{ }\cdots \text{ }|\text{ }1\text{ } 0 \text{ } 0 \text{
            		}\right]^T,~\bar x \in \Rset\}\subset X_2,
            	\end{equation}
            	with $\bar x\in\Rset$. Hence, by merging \eqref{eq:X2_a} and \eqref{eq:X2_b}, we have that
            	\begin{small}
            		\begin{equation*}
            			X_2 = \{\mbf{x}:\mbf{x} = \left[  \text{ }\bar{x}\text{ }\tilde x_{12}
            			\text{ } \tilde x_{13} \text{ }|\text{ }\cdots \text{ }|\text{ }\bar{x}\text{ }
            			\tilde x_{N2} \text{ } \tilde x_{N3} \text{ }\right]^T, \bar{x},
            			\tilde{x}_{i2},\tilde{x}_{i3}\in\Rset\}.
            		\end{equation*}
            	\end{small}
            	Next, we characterize the set $X_1$. By exploiting Proposition
            	\ref{pr:quadr_form}, it follows that
            	\begin{small}
            		\begin{equation}
            			X_1 = \{\mbf{x}:\mbf{x} =\left[  \text{ }\alpha_1\text{ }\beta_1
            			\text{ } \delta_1\beta_1 \text{ }|\text{ }\cdots \text{ }| \text{ }\alpha_N\text{ }\beta_N
            			\text{ } \delta_N\beta_N \text{ }\right]^T, \alpha_i,\beta_i\in\Rset\},
            		\end{equation}
            	\end{small}
            	and, from \eqref{eq:R},
            	\begin{small}
            		\begin{equation}
            			\label{eq:R_new}
            			R = \{\mbf{x}:\mbf{x} =\left[  \text{ }\bar{\alpha}\text{ }\beta_1
            			\text{ } \delta_1\beta_1 \text{ }|\text{ }\cdots \text{ }| \text{ }\bar{\alpha}\text{ }\beta_N
            			\text{ } \delta_N\beta_N \text{ }\right]^T, \bar{\alpha}, \beta_i\in\Rset\}.
            		\end{equation}
            	\end{small}
            	At this point, in order to conclude the proof, we need to show that the largest
            	invariant set $M\subseteq R$ is the origin. To this purpose, we consider \eqref{eq:modelDGUgen-aug-closed}, include coupling terms $\subss{\hat\xi}i$, set $\hat d_{[i]}= 0$ and choose as initial
            	state $\mbf{\hat x}(0) = \left[ \hat x_1(0)|\dots|\hat
            	x_N(0)\right]^T\in R$. We aim to find conditions on the
            	elements of  $\mbf{\hat x}(0)$ that must hold for having
            	$\mbf{\dot{\hat{x}}}\in R$. One has
            	\begin{equation*}
            		\begin{aligned}
            			\dot{\hat x}_i(0) &={F_i}\hat
            			x_i(0)+\sum\limits_{j\in\NN_i}\underbrace{\hat A_{ij}\left(\hat
            				x_j(0)-\hat x_i(0)\right)}_{=0}=\\
            			&=\left[\begin{array}{ccc}
            				0 & \frac{1}{C_{ti}} & 0 \vspace{2mm}\\ 
            				\frac{k_{1,i}-1}{L_{ti}} & \frac{k_{2,i}-R_{ti}}{L_{ti}} & \frac{k_{3,i}}{L_{ti}} \vspace{2mm}\\
            				-1 & 0 & 0\\
            			\end{array}\right]\left[\begin{array}{c}
            				\bar\alpha \vspace{2mm}\\
            				\beta_i \vspace{2mm}\\
            				\delta_i\beta_i
            			\end{array}\right]=
            			\\
            			&=\left[\begin{array}{c}
            				\frac{\beta_i}{C_{ti}}  \vspace{2mm}\\
            				\frac{k_{1,i}-1}{L_{ti}}\bar{\alpha} +\underbrace{\frac{k_{2,i}-R_{ti}}{L_{ti}}\beta_i+\frac{k_{3,i}}{L_{ti}}\delta_i\beta_i}_{=0}\vspace{2mm}\\
            				-\bar{\alpha}
            			\end{array}\right]=\left[\begin{array}{c}
            				\frac{\beta_i}{C_{ti}}  \vspace{2mm}\\
            				\frac{k_{1,i}-1}{L_{ti}}\bar{\alpha} \vspace{2mm}\\
            				-\bar{\alpha}
            			\end{array}\right],
            			\normalsize
            		\end{aligned}
            	\end{equation*}
            	for all $i\in\DD$. From \eqref{eq:R_new}, $\mbf{\dot{\hat{x}}}(0)\in R$ if and only if, it holds
            	\begin{subequations}
            		\begin{empheq}[left =  \empheqlbrace]{align}
            			\label{eq:R_inv_a} \frac{\beta_i}{C_{ti}} &=
            			\bar\rho\\
            			\label{eq:R_inv_b}\frac{k_{1,i}-1}{L_{ti}}\delta_i\bar\alpha&=-\bar\alpha,
            		\end{empheq}
            	\end{subequations}
            	for $i\in\DD$ and $\bar\rho\in\Rset$. Condition \eqref{eq:R_inv_b} is fulfilled if either
            	\begin{equation}
            		\label{eq:alpha_0}
            		\bar\alpha = 0
            	\end{equation}
            	or 
            	\begin{equation}
            		\label{eq:delta_solutions}
            		\delta_i = -\frac{L_{ti}}{k_{1,i}-1}.
            	\end{equation}
            	Let us
            	first focus on \eqref{eq:delta_solutions}. From Proposition \ref{pr:prop_1}, we have that
            	$q_{12,i}=q_{13,i} = 0$. By direct computation, one has that
            	\begin{equation}
            		\label{eq:p12_0}
            		q_{12,i} = \frac{(k_{1,i}-1)}{L_{ti}}p_{22,i} -p_{23,i}
            		+\frac{\eta_i}{C_{ti}}.              
            	\end{equation}
            	Moreover, we show that 
            		\begin{equation}
            			\label{eq:conditionp22}
            			p_{22,i}=-\delta p_{23,i}.
            	\end{equation}
            	\label{rmk:elements_of_P}
            		From Proposition \ref{prp:ker_F22}, we know that only vectors $v\in\mathrm{Ker}(\FF_{22})$ satisfy \eqref{eqn:maximum}. Then, \eqref{eq:conditionp22} can be obtained by solving \eqref{eqn:maximum} with vectors defined as in \eqref{eqn:Ker_F_22}.
            	
            	By exploiting \eqref{eq:conditionp22}, substituting \eqref{eq:equal_ratio} in \eqref{eq:p12_0}, and setting
            	$q_{12,i}=0$, we get
            	\begin{equation}
            		\label{eq:k1}
            		k_{1,i} = 1-\frac{L_{ti}}{\delta_i}-\bar\sigma\frac{L_{ti}}{p_{22,i}}.
            	\end{equation}
            	Then, if we replace \eqref{eq:k1} in \eqref{eq:delta_solutions}, we obtain
            	\begin{equation*}
            			\delta_i =
            			\frac{\delta_i p_{22,i}}{p_{22,i}+\bar\sigma\delta_i}=\frac{\delta_i}{1+\bar\sigma\frac{\delta_i}{p_{22,i}}},
            	\end{equation*}
            	which is true if $\bar\sigma\frac{\delta_i}{p_{22,i}}=0$. This latter condition, however, is never verified since $L_{ti}$ in \eqref{eq:delta_solutions} is always positive, as well as $p_{22,i}$ and $\bar\sigma$. It follows that
            	\eqref{eq:R_inv_b} has only one solution, which is
            	\eqref{eq:alpha_0}. Therefore, by considering also the solutions of \eqref{eq:R_inv_a},
            	we find that $\mbf{\dot{\hat{x}}}(0)\in R$ only if
            	$\mbf{\hat{x}}(0)\in S$, where 
            	\begin{small}
            		\begin{equation}
            			\label{eq:S}
            			S= \{\mbf{x} = \left[\text{ }0\text{ }\bar\rho C_{t1}\text{ }
            			\bar\rho\delta_1C_{t1}\text{ }|\text{ }\dots\text{ }|\text{ }0\text{
            			}\bar\rho C_{tN}\text{ }\bar\rho\delta_NC_{tN}\text{
            			}\right]^T, \bar{\rho}\in\Rset\}.
            		\end{equation}
            	\end{small}
            	Furthermore, it must hold $M \subseteq S$. Then, in order to
            	characterize $M$, we pick an initial state $\mbf{\tilde x}(0) = \left[ \tilde x_1(0)|\dots|\tilde
            	x_N(0)\right]^T\in S$ and impose
            	$\mbf{\dot{\tilde{x}}}(0)\in S$. This translates into the equations 
            		\begin{equation*}
            			\begin{aligned}
            				\dot{\tilde x}_i(0) &=(\hat{A}_{ii}+\hat B_iK_i)\tilde
            				x_i(0)+\sum\limits_{j\in\NN_i}\underbrace{\hat A_{ij}\left(\tilde x_j(0)-\tilde x_i(0)\right)}_{=0}=\\
            				&=(\hat{A}_{ii}+\hat B_iK_i)\left[\begin{array}{c}
            					0\\
            					\bar\rho C_{ti}\\
            					\bar\rho\delta_iC_{ti}
            				\end{array}\right] = 
            				\left[\begin{array}{c}
            					{\bar\rho}  \vspace{2mm}\\
            					0 \vspace{2mm}\\
            					0
            				\end{array}\right],
            				%\left[  {\bar\rho}\text{ }0 \text{ }0 \right ]^T,
            			\end{aligned}
            		\end{equation*}
            	for all $i\in\DD$. It follows that $\mbf{\dot{\tilde{x}}}(0)\in S$ only if
            	$\bar\rho = 0$. Since $M\subseteq S$, from \eqref{eq:S} one has $M = \{0\}$.
            \end{proof}
            
            % \begin{rmk}
            % The first difference here with respect to the control design method described
            % in \cite{Tucci2015a} is that, by computing bla bla, we have no longer to make
            % specific assumption on the magnitude of coefficients $\eta_i$. In Proposition 3 of \cite{Tucci2015a},
            % in fact, we stated that each $\eta_i$ should be sufficiently small, so
            % as to have $\frac{\eta_i }
            %   {R_{ij}C_{ti}}\approx 0$, $\forall i\in\DD$, $\forall
            %   j\in\NN_i$ and make the coupling among DGUs
            % negligible (see term $(b)$ in (23) of \cite{Tucci2015a}). However, in the proof
            % of Theorem \ref{thm:overall_stability}, we show that coupling terms never spoil the overall
            % stability for any $\eta_i$, provided that Assumptions \ref{ass:ctrl} and \ref{ass:equal_ratio} are
            % fulfilled (i.e. $(b)+(c) \leq 0$ independently of $\eta_i$).
            % \end{rmk}
            % \begin{rmk}
            % It also follows that, differently from the methodology in \cite{Tucci2015a}, the synthesis of
            % each local controller does no longer require information about the
            % lines connected to the corresponding DGU, as matrices $A_{ii}$ and $B_{ii}$ in \eqref{eq:modelDGUgen-aug-closed} account for local dynamics only. 
            % \end{rmk}
            \subsection{Line-independent controller computation through LMIs}
            \label{subsec:LMI}
            We now show how to compute matrices $K_i$ and $P_i$ via numerical optimization so as to comply with assumptions of Theorem~\ref{thm:overall_stability}.
            In order to enforce, when possible, a margin of robustness, controllers $K_i$ should be designed such that inequality   
            \begin{equation}
            	\label{eq:Lyapdecr}
            	(\hat{A}_{ii}+\hat{B}_{i}K_{i})^{T}P_{i}+P_{i}(\hat{A}_{ii}+\hat{B}_{i}K_{i})+\Gamma_{i}^{-1}\le 0,
            \end{equation} with $\Gamma_i =
            \mathrm{diag}(\gamma_{1i},\gamma_{2i},\gamma_{3i})$, is
            verified for $\gamma_{ki}\geq 0$, $k = 1,2,3,$ and matrix $P_{i}$ structured as in \eqref{eq:pstruct}. 
            The design of the local
            controller $\subss{\CC}{i}$ is performed solving the following
            problem.
            \begin{prob}
            	\label{prbl:designPrbl}
            	For parameters $\eta_i$ chosen as in \eqref{eq:equal_ratio}, compute a vector $K_{i}$ such that 
            	Assumption \ref{ass:ctrl} is verified and \eqref{eq:Qi_semidef} holds.
            \end{prob}
            Consider the following optimization problem
            \begin{subequations}
            	\label{eq:optproblem}
            		\begin{align}
            			\mathcal{O}_i: &\min_{\substack{Y_{i},G_{i},\gamma_{1i},\\\gamma_{2i},\gamma_{3i},\beta_{i},\zeta_{i}}}\quad \alpha_{1i}\gamma_{1i}+\alpha_{2i}\gamma_{2i}+\alpha_{3i}\gamma_{3i}+\alpha_{4i}\beta_{i}+\alpha_{5i}\zeta_{i}\nonumber \\
            			\label{eq:Ystruct}&Y_{i}=\left[ \begin{smallmatrix}
            				\eta_i^{-1} & 0 & 0 \\
            				0 & \bullet & \bullet\\
            				0 & \bullet & \bullet\\
            			\end{smallmatrix}\right]>0\\
            			\label{eq:LMIstab}&\begin{bmatrix}
            				Y_{i}\hat{A}_{ii}^{T}+G_{i}^{T}\hat{B}_{i}^{T}+\hat{A}_{ii}Y_{i}+\hat{B}_{i}G_{i} & Y_{i} \\
            				Y_{i} & -\Gamma_{i}
            			\end{bmatrix}\le 0\\
            			\label{eq:Gcostr}&\begin{bmatrix}
            				-\beta_{i}I & G_{i}^T\\
            				G_{i} & -I
            			\end{bmatrix}<0\\
            			\label{eq:Ycostr}&\begin{bmatrix}
            				Y_{i} & I\\
            				I & \zeta_{i}I
            			\end{bmatrix}>0\\
            			&\gamma_{1i},\gamma_{2i},\gamma_{3i}\geq 0,\quad\beta_{i}>0,\quad\zeta_{i}>0
            		\end{align}
            \end{subequations}  
            where $\alpha_{ji}$, $j=1\dots,5$ represent positive weights and $\bullet$ are arbitrary
            entries. We notice that all constraints in \eqref{eq:optproblem}
            are LMIs. Therefore, the optimization
            problem $\OO_i$ is convex and can be solved in
            polynomial time \cite{Boyd1994}. 
            The next Lemma, proved in \cite{Tucci2015a}, establishes the relations between problem $\OO_i$ and matrices $K_i$ and $P_i$. 
            \begin{lem}
            	\label{lem:optProbl}
            	Problem $\mathcal{O}_i$ is feasible if and only if
            	Problem \ref{prbl:designPrbl} has a
            	solution. Moreover, $K_i$ and $P_i$ in
            	\eqref{eq:Lyapeqnith} are given by
            	$K_{i}=G_{i}Y_{i}^{-1}$, $P_{i}=Y_{i}^{-1}$ and 
            	\begin{equation}
            		\label{eq:K_magn}
            		\norme{K_{i}}{2}<\sqrt{\beta_i}\zeta_{i}.
            	\end{equation}
            \end{lem}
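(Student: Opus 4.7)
The plan is to verify the lemma by a standard LMI reformulation based on a congruence transformation, a change of variables, and two applications of the Schur complement, together with a direct inspection of how the block structure in \eqref{eq:Ystruct} encodes the structural requirement \eqref{eq:pstruct}.

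First I would establish the equivalence between feasibility of $\mathcal O_i$ and Problem~\ref{prbl:designPrbl}. Assuming a solution $(P_i,K_i)$ to Problem~\ref{prbl:designPrbl}, set $Y_i = P_i^{-1}$ and $G_i = K_i Y_i$. Since $P_i$ has the block-diagonal form \eqref{eq:pstruct} with scalar $\eta_i$ in the $(1,1)$ entry and $\PP_{22,i}$ in the lower-right $2\times2$ block, its inverse inherits the same block structure with $\eta_i^{-1}$ in the $(1,1)$ entry, so \eqref{eq:Ystruct} is satisfied. To handle \eqref{eq:Qi_semidef} (or the strict variant \eqref{eq:Lyapdecr}), I would pre- and post-multiply by $Y_i$ to obtain
\begin{equation*}
Y_i \hat A_{ii}^T + G_i^T \hat B_i^T + \hat A_{ii} Y_i + \hat B_i G_i + Y_i \Gamma_i^{-1} Y_i \le 0,
\end{equation*}
and then invoke the Schur complement on the term $Y_i \Gamma_i^{-1} Y_i$ to obtain exactly \eqref{eq:LMIstab} (with nonnegative slack variables $\gamma_{ki}$ that are introduced as extra decision variables to give the desired robustness margin). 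Conversely, given a feasible tuple $(Y_i,G_i,\gamma_{ki},\beta_i,\zeta_i)$ of $\mathcal O_i$, the same manipulations, read backwards, recover $P_i = Y_i^{-1}$ and $K_i = G_i Y_i^{-1}$ satisfying Assumption~\ref{ass:ctrl} and \eqref{eq:Qi_semidef}.

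Next I would derive the gain bound \eqref{eq:K_magn}. Applying the Schur complement to \eqref{eq:Gcostr} yields $G_i G_i^T < \beta_i I$, so $\|G_i\|_2 < \sqrt{\beta_i}$. Similarly, \eqref{eq:Ycostr} is equivalent, via Schur, to $Y_i^{-1} < \zeta_i I$, whence $\|P_i\|_2 = \|Y_i^{-1}\|_2 < \zeta_i$. Using submultiplicativity of the spectral norm,
\begin{equation*}
\|K_i\|_2 = \|G_i Y_i^{-1}\|_2 \le \|G_i\|_2\,\|Y_i^{-1}\|_2 < \sqrt{\beta_i}\,\zeta_i,
\end{equation*}
which is precisely \eqref{eq:K_magn}.

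The only delicate point, and the step I would write out carefully, is verifying that the sparsity pattern imposed on $Y_i$ in \eqref{eq:Ystruct} is indeed equivalent to Assumption~\ref{ass:ctrl}: one must check that a positive-definite $3\times3$ matrix with a zero first row/column outside the diagonal has an inverse with the same zero pattern, and that fixing the $(1,1)$ entry of $Y_i$ to $\eta_i^{-1}$ is equivalent to fixing the $(1,1)$ entry of $P_i$ to $\eta_i$ (so that Assumption~\ref{ass:equal_ratio} can be enforced directly in the LMI through the choice $\eta_i = \bar\sigma C_{ti}$). The rest of the argument is a routine application of convex-programming tools, and since all matrix inequalities in $\mathcal O_i$ are linear in the decision variables, solvability in polynomial time follows from standard interior-point theory \cite{Boyd1994}.
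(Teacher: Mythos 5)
Your overall route is the expected one: the paper itself does not prove this lemma (it appeals to the proof of the analogous result in \cite{Tucci2015a}), and that proof is precisely the standard change of variables $Y_i=P_i^{-1}$, $G_i=K_iY_i$, congruence, Schur complements, and the norm bounds. Your treatment of the block-structure correspondence between \eqref{eq:Ystruct} and \eqref{eq:pstruct}, the backward direction (feasibility of $\mathcal{O}_i$ implies a solution of Problem~\ref{prbl:designPrbl} with $K_i=G_iY_i^{-1}$, $P_i=Y_i^{-1}$), and the derivation of \eqref{eq:K_magn} from \eqref{eq:Gcostr}--\eqref{eq:Ycostr} are all correct.

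The gap is in the forward direction, exactly at the step you treat as routine: starting from a solution of Problem~\ref{prbl:designPrbl} you only have \eqref{eq:Qi_semidef}, and you must still exhibit finite $\gamma_{ki}\geq 0$ for which \eqref{eq:LMIstab} holds; the $\gamma$'s are not harmless slack variables. By the Schur complement, \eqref{eq:LMIstab} with $\Gamma_i\succ 0$ is equivalent to \eqref{eq:Lyapdecr}, i.e.\ $Q_i+\Gamma_i^{-1}\leq 0$, which is strictly stronger than $Q_i\leq 0$ and in fact can never hold under Assumption~\ref{ass:ctrl}: Proposition~\ref{pr:prop_1} forces $q_{11,i}=0$, so $(Q_i+\Gamma_i^{-1})_{11}=1/\gamma_{1i}>0$, and moreover the quadratic form of $\QQ_{22,i}$ vanishes on $\mathrm{Ker}(\FF_{22,i})\neq\{0\}$ (cf.\ Proposition~\ref{prp:ker_F22}), so adding any positive-definite diagonal to $\QQ_{22,i}$ destroys negative semidefiniteness as well. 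One cannot escape by letting some $\gamma_{ki}$ vanish either: with the line-independent $\hat A_{ii}$ of this paper (zero $(1,1)$ entry) and $Y_i$ structured as in \eqref{eq:Ystruct}, the $(1,1)$ entry of the upper-left block of \eqref{eq:LMIstab} is identically zero while the off-diagonal block contributes $\eta_i^{-1}\neq 0$ in the same row; the principal $2\times 2$ submatrix on indices $\{1,4\}$ has determinant $-\eta_i^{-2}<0$, so the $6\times 6$ matrix is never negative semidefinite, for any $\gamma_{ki}\geq 0$. This obstruction does not arise in \cite{Tucci2015a}, where the line-dependent $\hat A_{ii}$ has a strictly negative $(1,1)$ entry, which is why the cited argument goes through there but cannot simply be "read backwards" here. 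To close the lemma in the present setting you must say explicitly how the robustness term is to be interpreted (e.g.\ letting $\gamma_{1i}\to\infty$, equivalently restricting $\Gamma_i^{-1}$ to act only on the second and third coordinates, or otherwise modifying \eqref{eq:LMIstab}) and then prove the forward implication for that corrected problem; your sketch does not address this.
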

            %               \begin{proof}
            %               For the proof, we defer the reader to \cite{Tucci2015a}.
            %               \end{proof}
            We highlight that a suitable tuning of weights $\alpha_{ji}$,
            $j=1,\dots,5$, in the cost of problem $\OO_i$ allows one to achieve a
            trade-off between the magnitude of $\Gamma_i$ and the
            aggressiveness of the control action, governed in \eqref{eq:K_magn} by the
            magnitude of $\beta_i$ and $\zeta_i$.
            % \begin{rmk}
            %      From \eqref{eq:Lyapdecr}, the parameter $\gamma_i$ is
            %       the inverse of the quadratic stability margin \cite{Boyd1994}, which
            %       is a measure of robust stability. Furthermore, from
            %       \eqref{eq:K_magn}, small $\beta_i$ and $\delta_i$ prevent the
            %       control action from becoming too aggressive. A
            %       suitable tuning of weights $\alpha_{i1}, \alpha_{i2}
            %       $ and $\alpha_{i3}$ in the cost of problem $\OO_i$
            %       allows one to achieve a balance between these
            %       performance requirements.
            %     \end{rmk}
            
            Next, we discuss the key features of the proposed
            decentralized control approach. We first notice that
            constraints in
            \eqref{eq:optproblem} depend upon local fixed matrices
            ($\hat{A}_{ii}$, $\hat{B}_{i}$) and local design
            parameters ($\alpha_{1i}$, $\alpha_{2i}$,
            $\alpha_{3i}$, $\alpha_{4i}$, $\alpha_{5i}$). It follows that the computation of controller
            $\subss{\CC}{i}$ is completely independent from the
            computation of controllers $\subss{\CC}{j}$, $j\neq
            i$, up to the knowledge of the common parameter $\bar\sigma$ in \eqref{eq:equal_ratio}. Secondly, \eqref{eq:optproblem} is independent of parameters of
            electrical lines connecting DGUs. Thirdly, as discussed
            after Proposition \ref{pr:semidefinite_abc}, differently from
            \cite{Tucci2015a,Tucci2015dc}, the design procedure does not require
            that parameters $\eta_i$ are sufficiently small, so as
            to reduce the coupling among DGUs (see term (b) in (23)
            of \cite{Tucci2015a}). Finally, if problems
            $\OO_i$, $i\in\DD$, are
            feasible, then the
            overall closed-loop mG is asymptotically stable,
            provided that $k_{3,i}\neq 0$, $i\in\DD$ (see Theorem
            \ref{thm:overall_stability}). As for the latter condition, it was always
            fulfilled in all numerical experiments we
            performed. Nevertheless, if it does not hold, one can
            try to solve again \eqref{eq:optproblem} after changing the weights
            in the cost.
            Algorithm
            \ref{alg:ctrl_design} collects the steps of the overall
            design procedure. For improving the closed-loop
            bandwidth of each controlled DGU and the rejection of
            current disturbances $I_{Li}$, it also includes the
            optional design of pre-filters $\tilde{C}_{[i]}$ and disturbance
            compensators $N_{[i]}$ \cite{Tucci2015a,Tucci2015dc}.
            \begin{algorithm}[!htb]
            	\caption{Design of controller $\subss{\CC}{i}$ and compensators $\subss{\tilde{C}}{i}$ and $\subss{N}{i}$ for subsystem $\subss{\hat{\Sigma}}{i}^{DGU}$}
            	\label{alg:ctrl_design}
            	\textbf{Input:} DGU $\subss{\hat{\Sigma}}{i}^{DGU}$ as in \eqref{eq:modelDGUgen-aug} \\
            	\textbf{Output:} Controller $\subss{\CC}{i}$ and, optionally, pre-filter $\subss{\tilde{C}}{i}$ and compensator $\subss{N}{i}$\\
            	\begin{enumerate}[(A)]
            		\item\label{enu:stepAalgCtrl} Find $K_i$ solving the
            		LMI problem \eqref{eq:optproblem}. If it is not
            		feasible, or $k_{3,i}\neq 0$ cannot be obtained, \textbf{stop} (the controller $\subss\CC i$ cannot be designed).\\
            		\textbf{Optional step}
            		\item\label{enu:stepBalgCtrl} Design an
            		asymptotically stable local pre-filter
            		$\subss{\tilde{C}}{i}$ and compensator
            		$\subss{N}{i}$ (see Section III.D in \cite{Tucci2015a}).
            	\end{enumerate}
            \end{algorithm}
            
            \begin{rmk}
            		\label{rmk:}
            		In order to assess the conservativeness of the LMI \eqref{eq:optproblem}, we solved it for $\bar\sigma = 10$ and for various combinations of parameters $(R_t, L_t, C_t)$ characterizing the DGUs, checking when they are infeasible. In particular, we derived from the literature meaningful parameter ranges for converters typically used in low-voltage DC mGs. Numerical results, which are reported in Appendix B, show that the LMIs are always feasible.
            \end{rmk}

\subsection{PnP operations}
    \label{sec:PnP}
As in \cite{Tucci2015a,Tucci2015dc}, we describe the 
operations that need to be performed for handling the
addition and removal of DGUs, while
preserving the stability of the overall closed-loop
system. We consider an mG composed of subsystems
$\subss{\hat{\Sigma}}{i}^{DGU}, i\in\DD$, equipped with
local controllers $\subss{\CC}{i}$ and compensators
$\subss{\tilde{C}}{i}$ and $\subss{N}{i}$, $i\in\DD$
produced by Algorithm \ref{alg:ctrl_design}. We also
assume the graph $\GG_{el}$ is connected.

\textbf{Plugging-in operation}
Assume that a new DGU
$\subss{\hat{\Sigma}}{N+1}^{DGU}$
sends
a plug-in request. Let $\NN_{N+1}$ be the
set of DGUs that will be connected to
$\subss{\hat{\Sigma}}{N+1}^{DGU}$ through
power lines. The
design of controller $\subss{\CC}{N+1}$ and compensators
$\subss{\tilde{C}}{N+1}$ and $\subss{N}{N+1}$
requires Algorithm \ref{alg:ctrl_design} to be
executed. Differently from the plugging-in
protocol described in \cite{Tucci2015a,Tucci2015dc}, there is no need to 
redesign controllers $\subss{\CC}{j}$ and
compensators $\subss{\tilde{C}}{j}$ and
$\subss{N}{j}$, $\forall j\in\NN_{N+1}$, because
matrices $\hat{A}_{jj}$, $j\in\NN_{N+1}$ do not
change. Therefore, if Algorithm \ref{alg:ctrl_design} does not stop in Step
\ref{enu:stepAalgCtrl} when computing controllers $\subss{\CC}{N+1}$,
the plug-in of $\subss{\hat{\Sigma}}{N+1}^{DGU}$ is allowed.

\textbf{Unplugging operation}
Assume now that DGU $k\in\DD$, needs
to be disconnected from the network. Differently
from \cite{Tucci2015a,Tucci2015dc}, since the unplugging of
subsystem $\subss{\hat{\Sigma}}{k}^{DGU}$
does not change the matrix
$\hat{A}_{jj}$ of each
$\subss{\hat{\Sigma}}{j}^{DGU}$, $
j\in\NN_{k}$, DGU $k$
can be removed without redesigning the local
controllers $\subss\CC j$,
$j\notin\NN_{k}$. in view of
Theorem~\ref{thm:overall_stability}, stability is
preserved as long as the new graph $\GG_{el}$ is still connected.
\begin{rmk}
	According to the above PnP operations, whenever a DGU $i$ wants to be
	plugged -in or -out, no updating of controllers of neighboring DGU $j$,
	$j\in\NN_i$ is required. As a consequence, there is no
	need to equip each local
	controller with bumpless control scheme described in \cite{Tucci2015a,Tucci2015dc} for
	ensuring smooth behaviors of the control variable when controllers are
	switched in real-time. 
\end{rmk}
 \section{Simulation results}
                \label{sec:scenario_2}
  In this section, performance of the proposed controllers is evaluated. In order to compare
  the new PnP design methodology with the one in
  \cite{Tucci2015a,Tucci2015dc}, we performed the same simulation described
  in Scenario 2 in \cite{Tucci2015a}. Notably, we consider the meshed mG
  in Figure~\ref{fig:5areasplug} 
  composed of  DGUs $1-5$. 
  %Moreover, there is a loop in
  %$\GG_{el}$ that further
  %complicates voltage regulation. 
  We highlight that the DGUs have non-identical electrical
  parameters, which are listed in Tables 2, 3
  and 4 in Appendix C of \cite{Tucci2015dc}. As in \cite{Tucci2015a}, voltage
  references for the DGUs are set to slightly different values (see Table II, in \cite{Tucci2015a}), so as to
  make the case study more realistic, whereas
  the constant ratio $\bar\sigma$ in \eqref{eq:equal_ratio} has been chosen equal to 10.
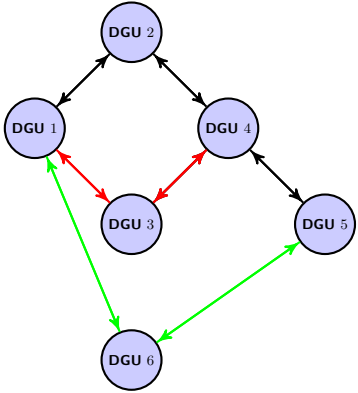
\begin{figure}[!htb]
                 \centering
                 	\begin{tikzpicture}[scale=0.6,transform shape,->,>=stealth',shorten >=1pt,auto,node distance=3cm, thick,main node/.style={circle,fill=blue!20,draw,font=\sffamily\bfseries}]
					  
  \node[main node] (1) {\small{DGU $1$}};
  \node[main node] (2) [above right of=1] {\small{DGU $2$}};
  \node[main node] (3) [below right of=1] {\small{DGU $3$}};
  \node[main node] (4) [above right of=3] {\small{DGU $4$}};
  \node[main node] (5) [below right of=4] {\small{DGU $5$}};
  \node[main node] (6) [below of=3] {\small{DGU $6$}};
 
  \path[every node/.style={font=\sffamily\small}]
  % (1) edge node [left] {} (3)
  % (3) edge node [right] {} (1)

  (1) edge node [left] {} (2)
  (2) edge node [right] {} (1)
  
  (3) edge node [left] {} (4)
  (4) edge node [right] {} (3)
  
  (2) edge node [left] {} (4)
  (4) edge node [right] {} (2)
  
  (4) edge node [left] {} (5)
  (5) edge node [right] {} (4);

  \draw[red] (1) to (3);
  \draw[red] (3) to (1);
  \draw[red] (3) to (4);
  \draw[red] (4) to (3);

  \draw[green] (1) to (6);
  \draw[green] (6) to (1);
  \draw[green] (5) to (6);
  \draw[green] (6) to (5);

\end{tikzpicture}
                 \caption{Graph $\GG_{el}$ of the mG
                   composed of DGUs 1-5 until $t = 4$ s and
                   of 6 DGUs after the plugging-in of
                   $\subss{\hat{\Sigma}}{6}^{DGU}$ (in green). At time
                   $t = 12$ s, DGU 3 is removed (in red).}
                 \label{fig:5areasplug}
               \end{figure}
We assume that DGUs 1-5 supply  $10\mbox{ } \Omega$,
$6\mbox{ } \Omega$, $4\mbox{ } \Omega$, $2 \mbox{ }\Omega$ and
$3\mbox{ } \Omega$ resistive loads, respectively. In PnP controllers
$\subss \CC i$, no compensators $\tilde C_{i}$ and $N_i$ have
been used. 

At $t = 0$, DGUs 1-5 are interconnected and equipped with controllers
$\subss{\CC}{i}$, $i = 1,\dots,5$, produced by Algorithm
\ref{alg:ctrl_design}. 

\subsection*{Plug-in of a new DGU}
At time $t = 4$ s, we simulate the connection of DGU
$\subss{\hat{\Sigma}}{6}^{DGU}$ with
$\subss{\hat{\Sigma}}{1}^{DGU}$ and
$\subss{\hat{\Sigma}}{5}^{DGU}$ (as shown in Figure~\ref{fig:5areasplug}) so as to assess the PnP
capabilities of our design procedure. According to
the plug-in protocol described in Section
\ref{sec:PnP}, one must run
Algorithm \ref{alg:ctrl_design} only for designing
$\subss \CC
6$. As the Algorithm does not stop in Step
\ref{enu:stepAalgCtrl}, the plug-in of DGU 6 is 
performed and, most importantly, no update of the controllers $\CC_{[j]}$,
$j\in{\NN_{6}}$, with $\NN_{6}=\{1,5\}$ is required. 
Figure \ref{fig:Sc2_V_0}
illustrates voltages at PCCs 1, 5 and 6 around the plug-in time. Although we have different
voltages at PCCs, we notice
very small deviations of the output signals of DGUs 1, 5
and 6 from their
references when DGU 6 is plugged-in. Moreover, since no
switch of controller is performed, these perturbation are
much smaller than those in the corresponding experiments
shown in \cite{Tucci2015a}.
    \begin{figure}[!htb]
                      \centering
                      \begin{subfigure}[htb]{0.48\textwidth}%\hspace{-4.5mm} 
                        \centering
                           \includegraphics[width=1\textwidth]{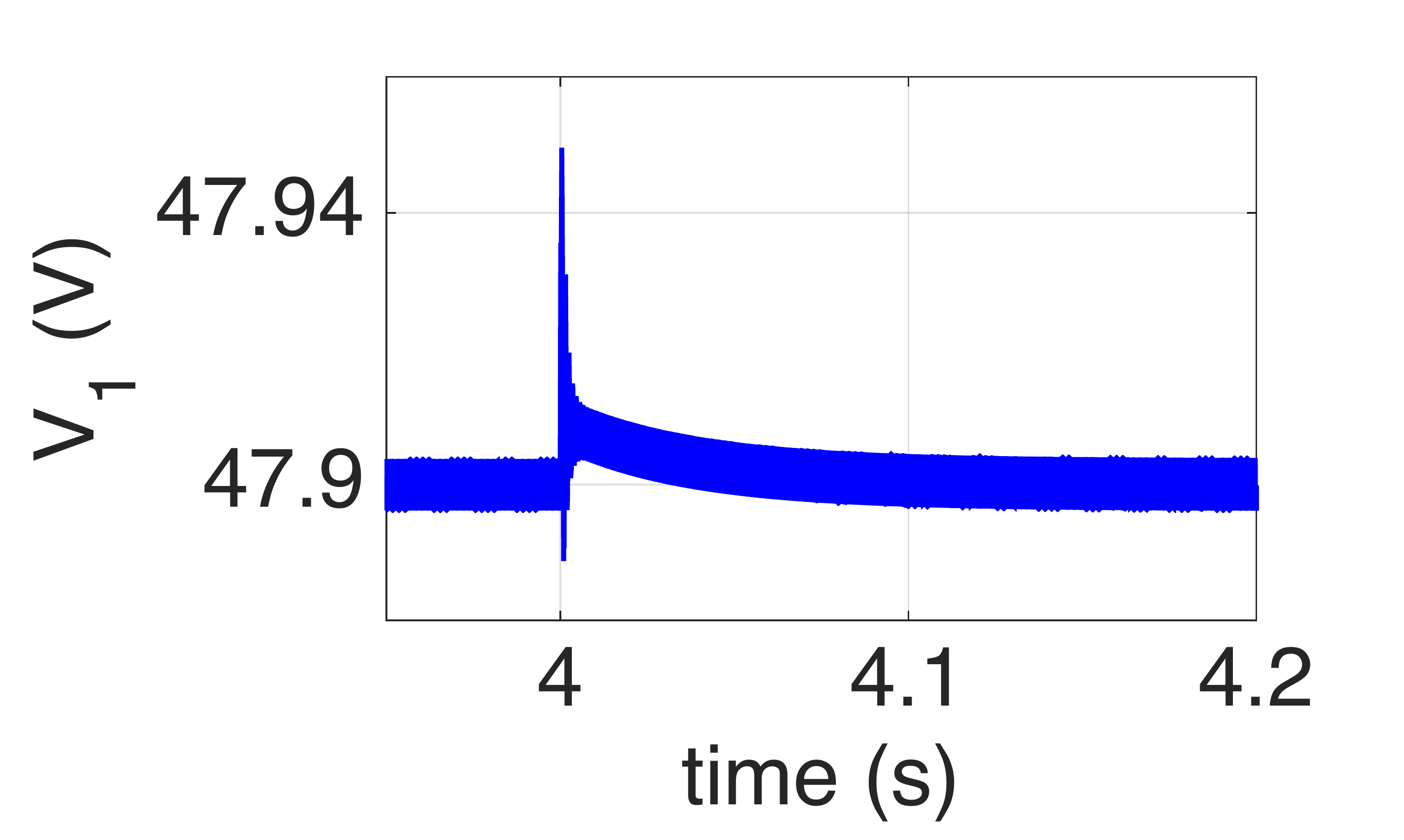}
                        \caption{Voltage at $PCC_1$.}
                        \label{fig:Sc2_V1}
                      \end{subfigure}
                      \begin{subfigure}[htb]{0.48\textwidth} %\hspace{-3.45mm}
                        \centering
                       \includegraphics[width=1\textwidth]{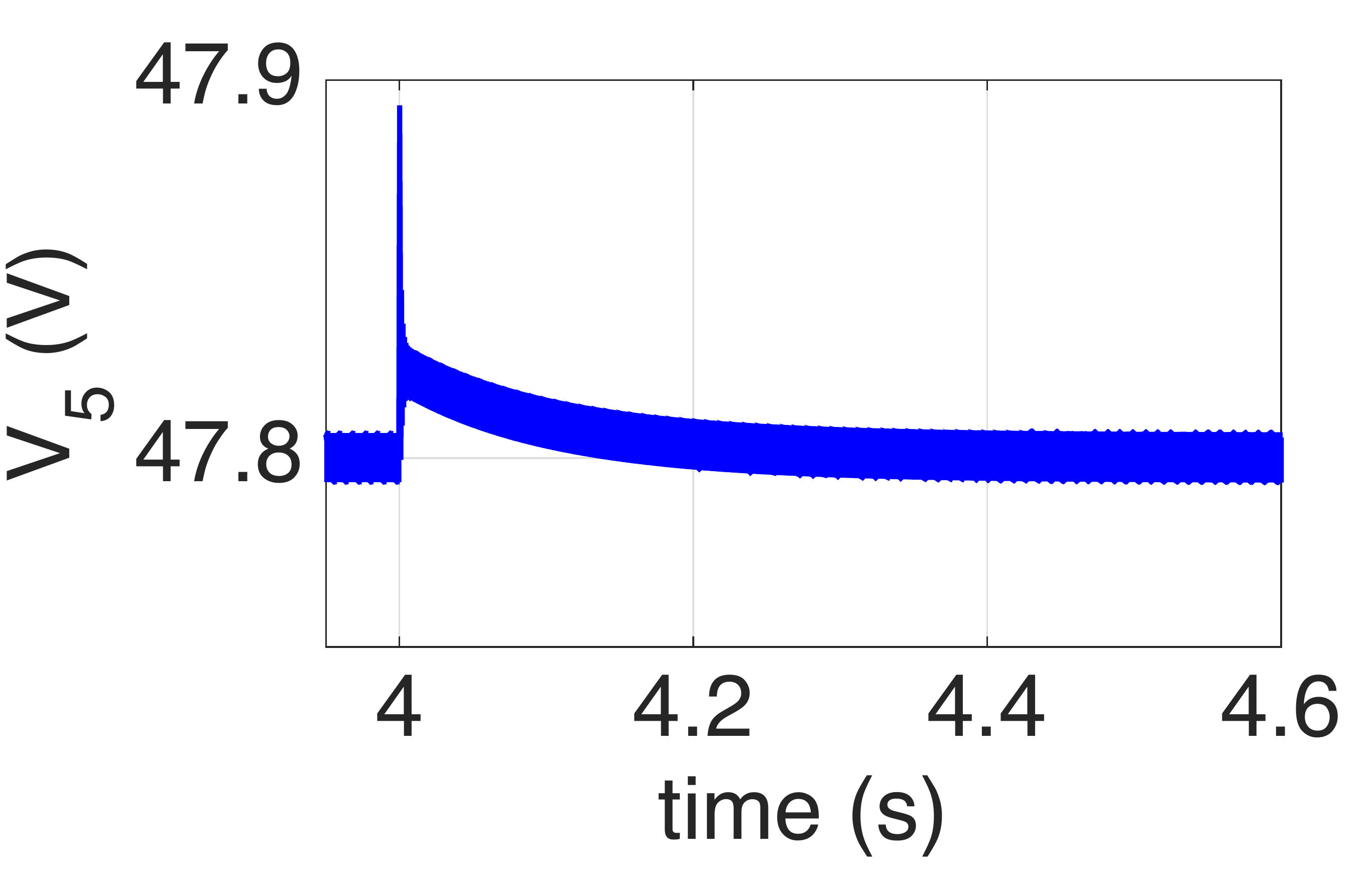}
                        \caption{Voltage at $PCC_5$.}
                        \label{fig:Sc2_V5}
                      \end{subfigure}
                      \begin{subfigure}[htb]{0.48\textwidth}%\hspace{-3.5mm} 
                        \centering
                        %\vspace{3mm}
                        \includegraphics[width=1\textwidth]{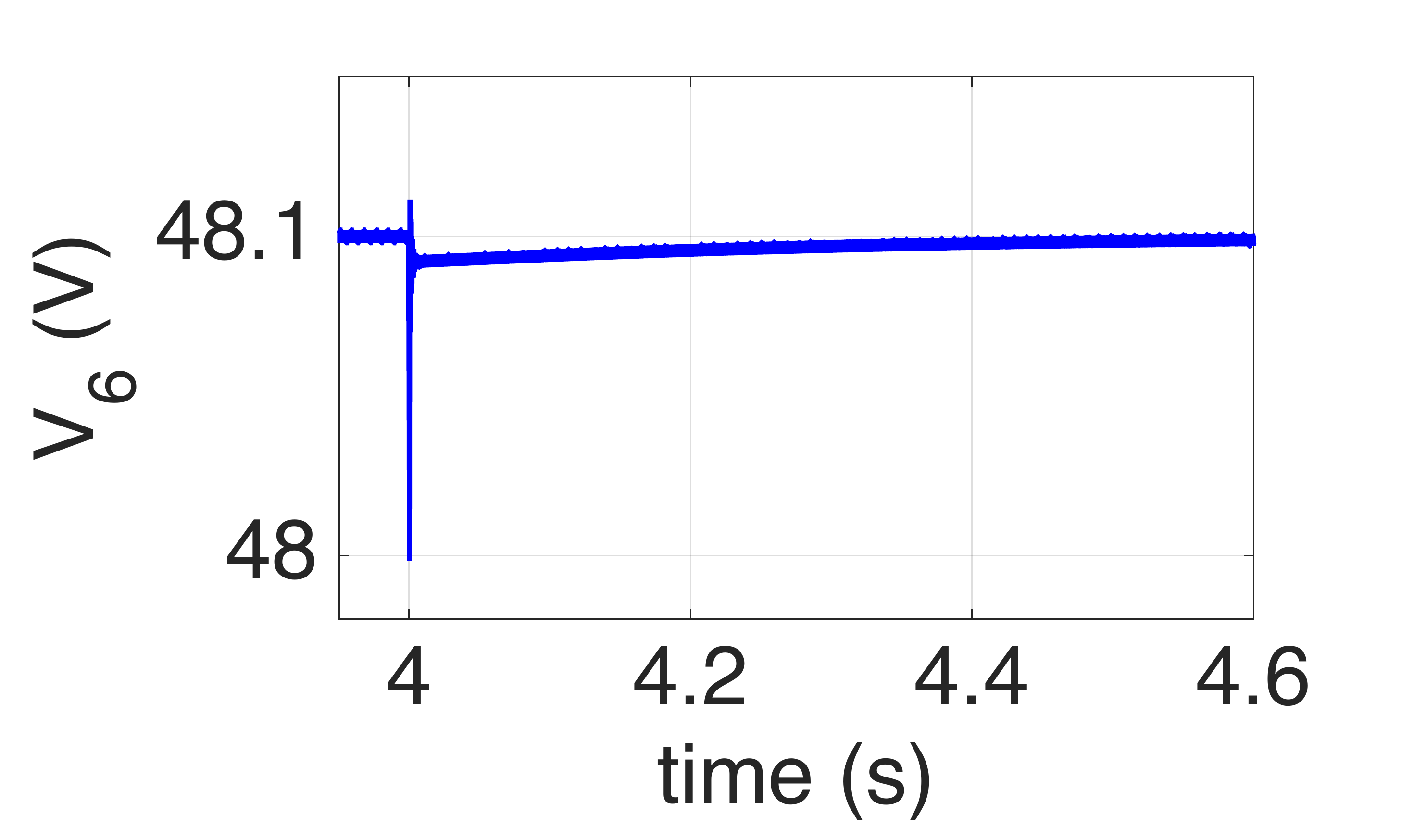}
                        \caption{Voltage at $PCC_6$.}
                        \label{fig:Sc2_V6}
                      \end{subfigure}
                     \caption{Performance of PnP
                       decentralized voltage controllers during the
                       plug-in of DGU 6 at time $t=4$ s.}
                      \label{fig:Sc2_V_0}                 
                    \end{figure}

\subsection*{Robustness to unknown load changes}
At $t=8$ s, the load of DGU 6 is decreased from $8
\mbox{ }\Omega$ to $4
\mbox{ }\Omega$. As shown in Figures
\ref{fig:Sc2_V1}-\ref{fig:Sc2_V5}, 
%when the load
%of $\subss{\hat{\Sigma}}{6}^{DGU}$ changes, 
right after $t=8$ the voltages at $PCC_1$ and
$PCC_5$ slightly deviate from their references. However, these
oscillations disappear after very short transients. A similar behavior
can be noticed for the PCC voltage of DGU 6 (Figure~\ref{fig:Sc2_V6}).
\begin{figure}[!htb]
                \centering
                      \begin{subfigure}[htb]{0.48\textwidth} 
                        \centering    
                        \includegraphics[width=1\textwidth]{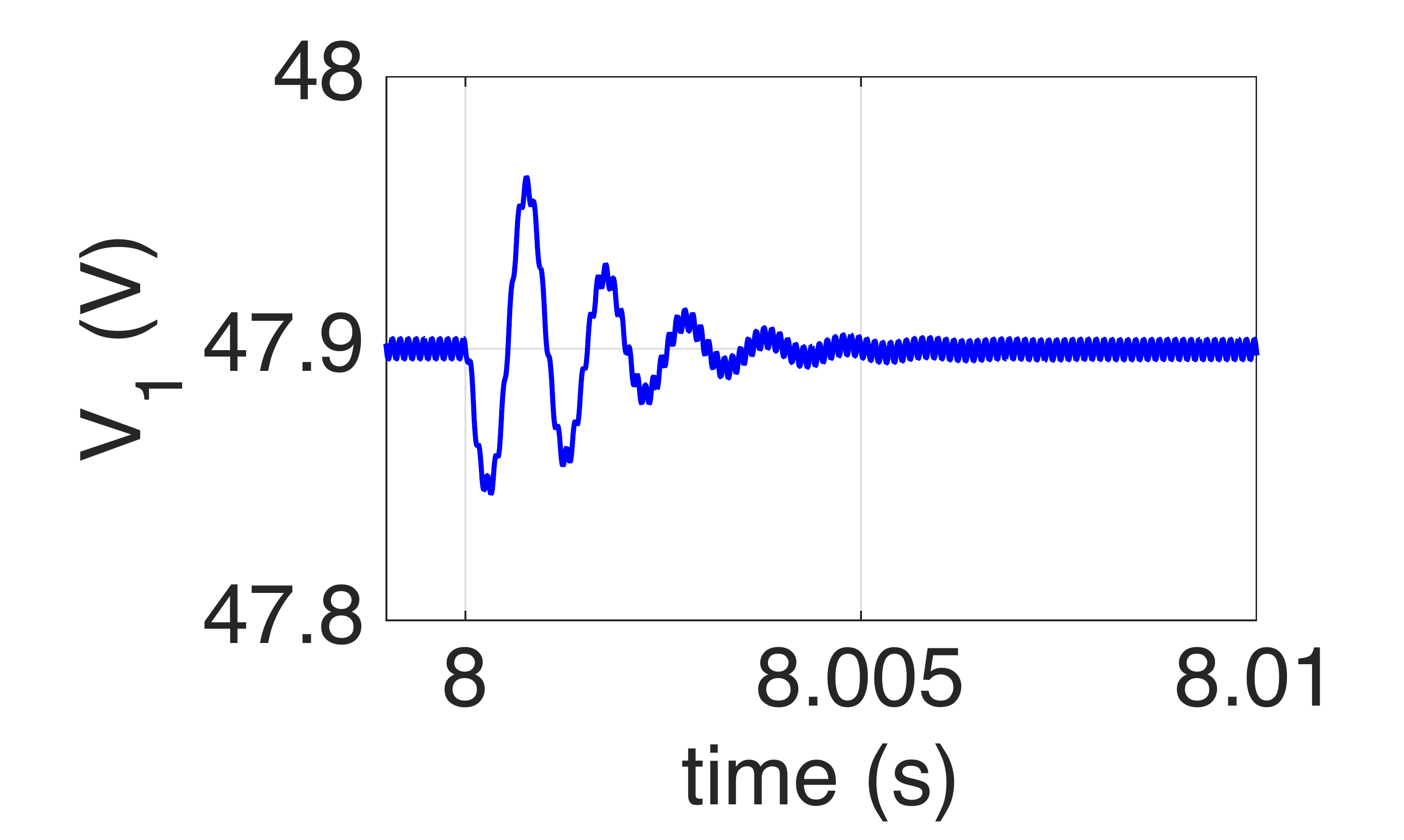}
                        \caption{Voltage at $PCC_1$.}
                        \label{fig:Sc2_V1}
                      \end{subfigure}
                      \begin{subfigure}[htb]{0.48\textwidth}
                       \centering
                       \includegraphics[width=1\textwidth]{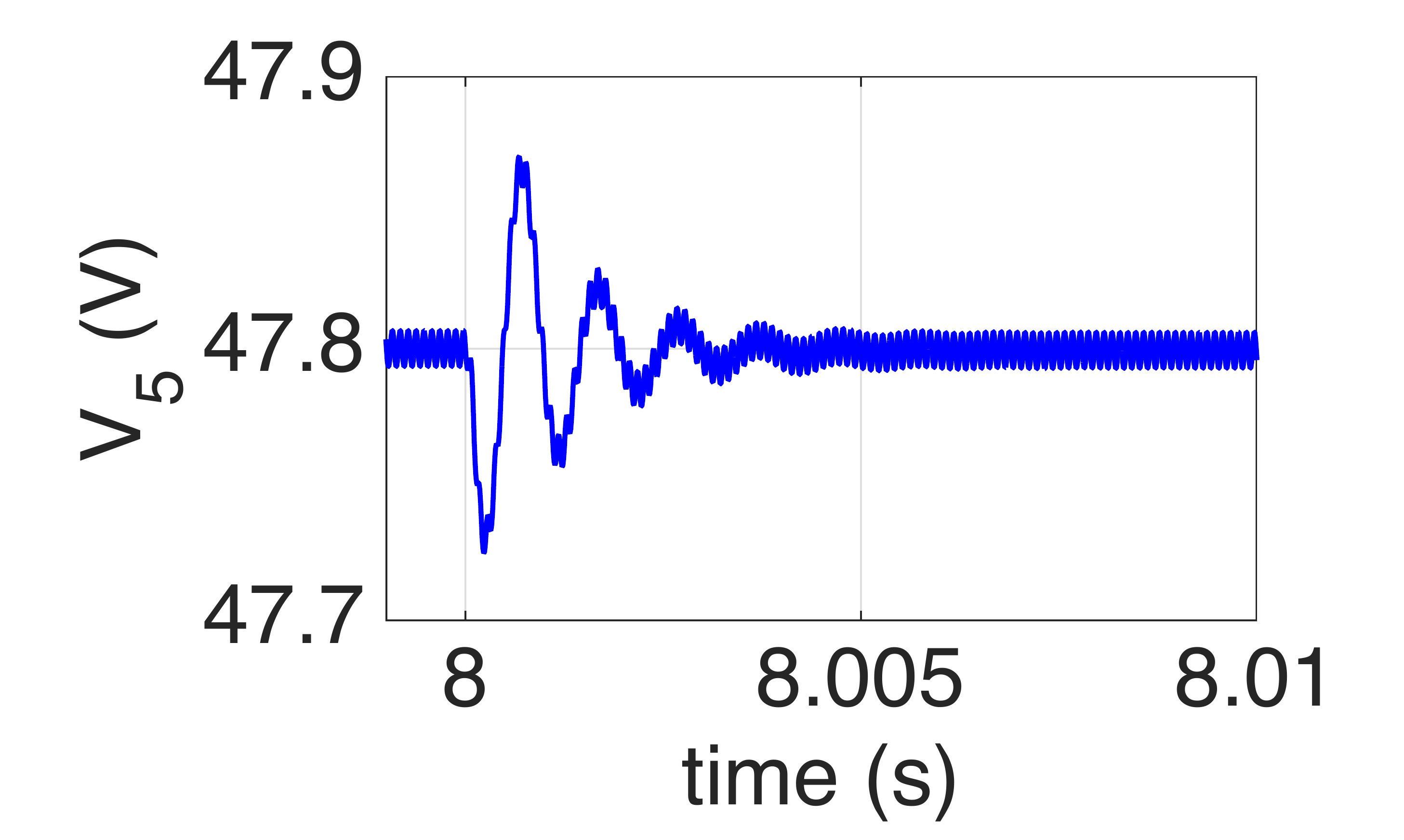}
                        \caption{Voltage at $PCC_5$.}
                        \label{fig:Sc2_V5}
                      \end{subfigure}
                      \begin{subfigure}[htb]{0.48\textwidth}\hspace{2mm} 
                       \centering
                       % \vspace{3mm}
                       \includegraphics[width=1\textwidth]{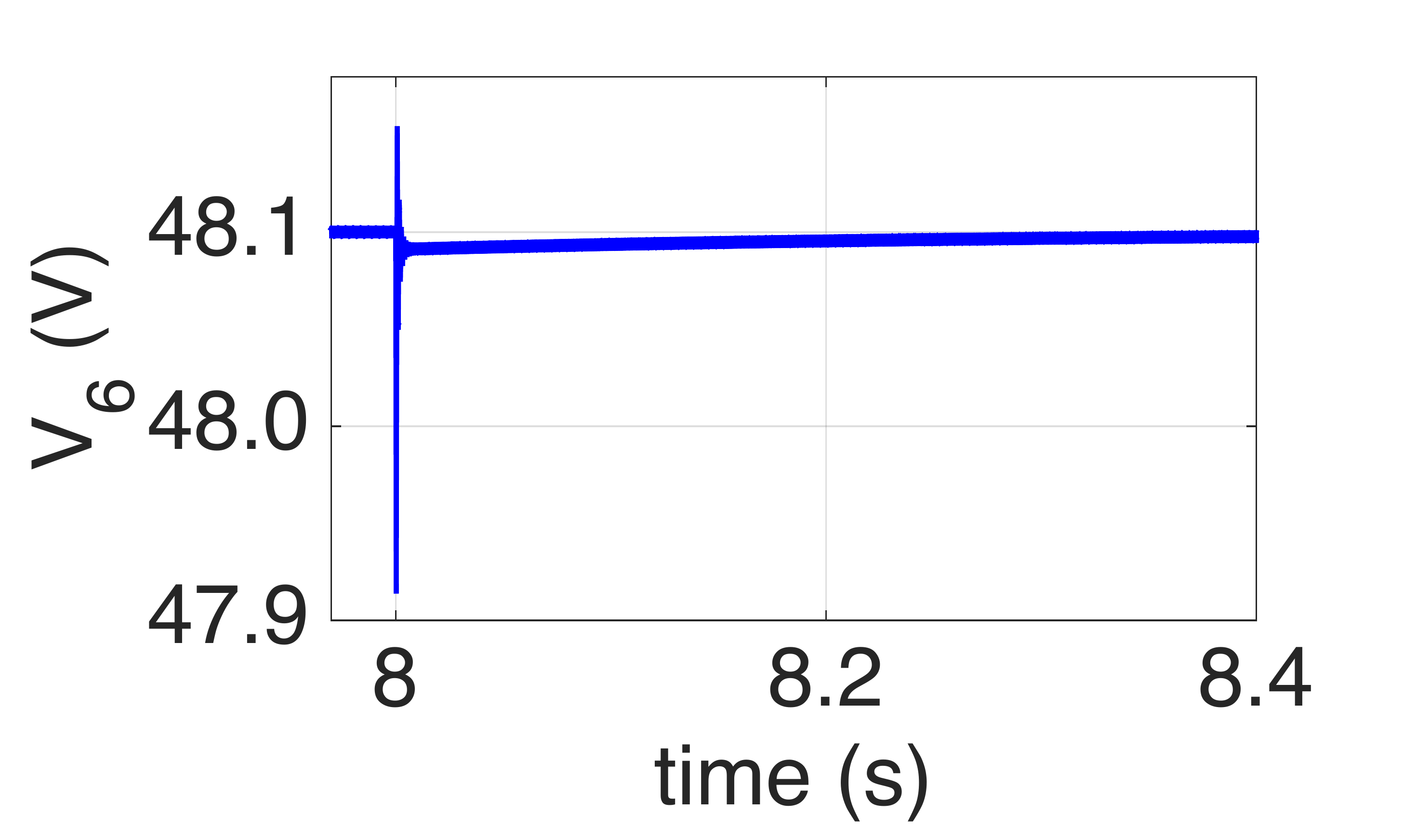}
                        \caption{Voltage at $PCC_6$.}
                        \label{fig:Sc2_V6}
                      \end{subfigure}
                     \caption{Performance of PnP
                       decentralized voltage controllers in terms of
                       robustness to an abrupt change of load
                       resistances at time $t = 8$ s.}
                      \label{fig:Sc2_V_1}                 
                    \end{figure}
\subsection*{Unplugging of a DGU}
At time $t = 12 s$ we perform the disconnection of
$\subss{\hat{\Sigma}}{3}^{DGU}$
(see Figure \ref{fig:5areasplug}). 
As described in Section \ref{sec:PnP},
no controller update is required for the DGUs that were connected to
it (i.e. DGUs 1 and 4). Figure \ref{fig:Sc2_V_2} shows voltages at PCCs 1 and 4 around the
unplugging event. Since controllers
$\CC_{[1]}$ and $\CC_{[4]}$ do not need to be updated,
subsystems $\subss{\hat{\Sigma}}{j}^{DGU}$, $j\in\NN_3$, show deviations
from their respective references which are smaller than those provided
in \cite{Tucci2015a}.
 \begin{figure}[!htb]
	
	\begin{subfigure}[!htb]{0.48\textwidth}
		\vspace{-1.1mm}
		\centering
		\includegraphics[width=1\textwidth]{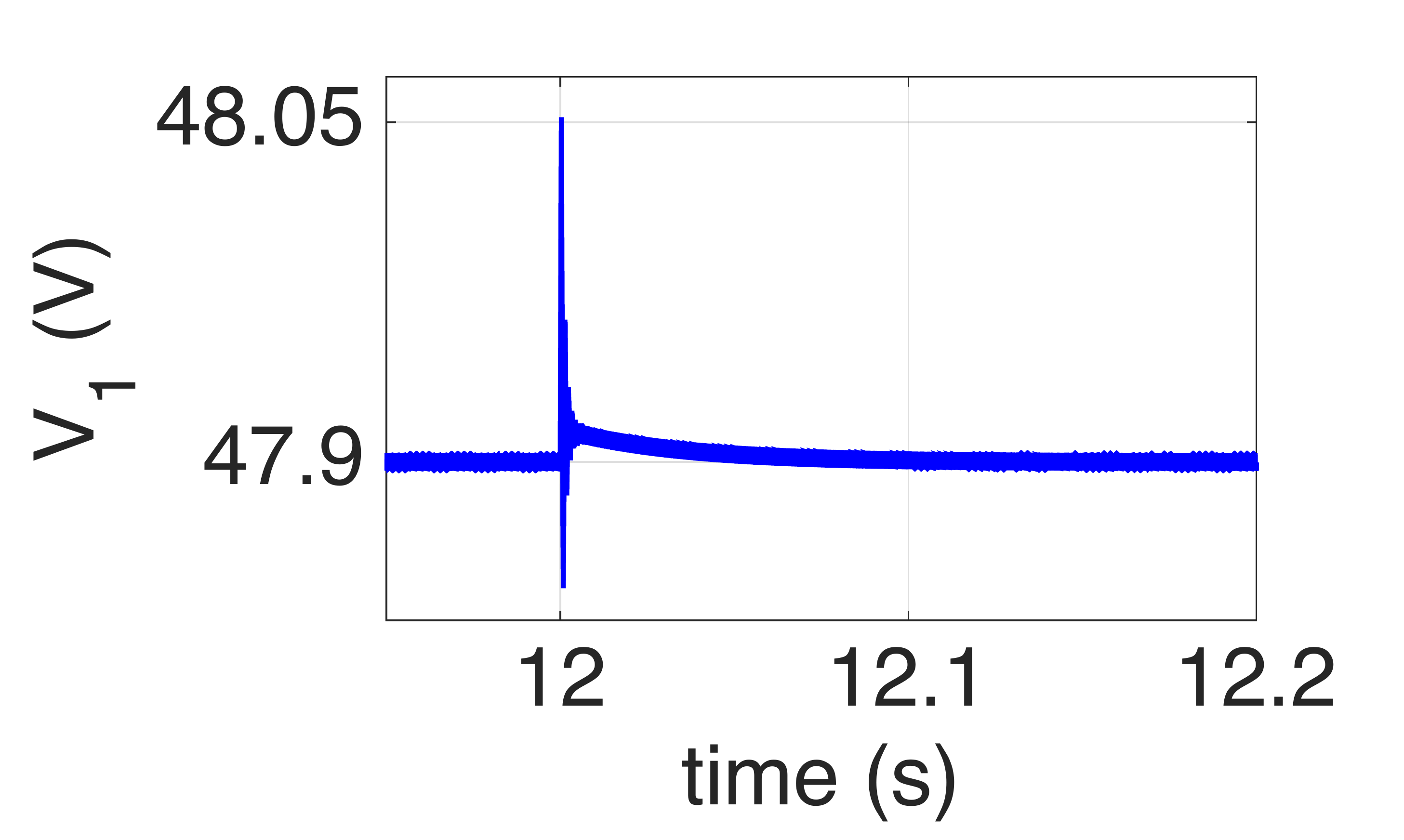}
		\caption{Voltage at $PCC_1$.}
		\label{fig:Sc2_V1_unplugging}
	\end{subfigure}
	\begin{subfigure}[!htb]{0.48\textwidth}
		\centering
		\includegraphics[width=1\textwidth]{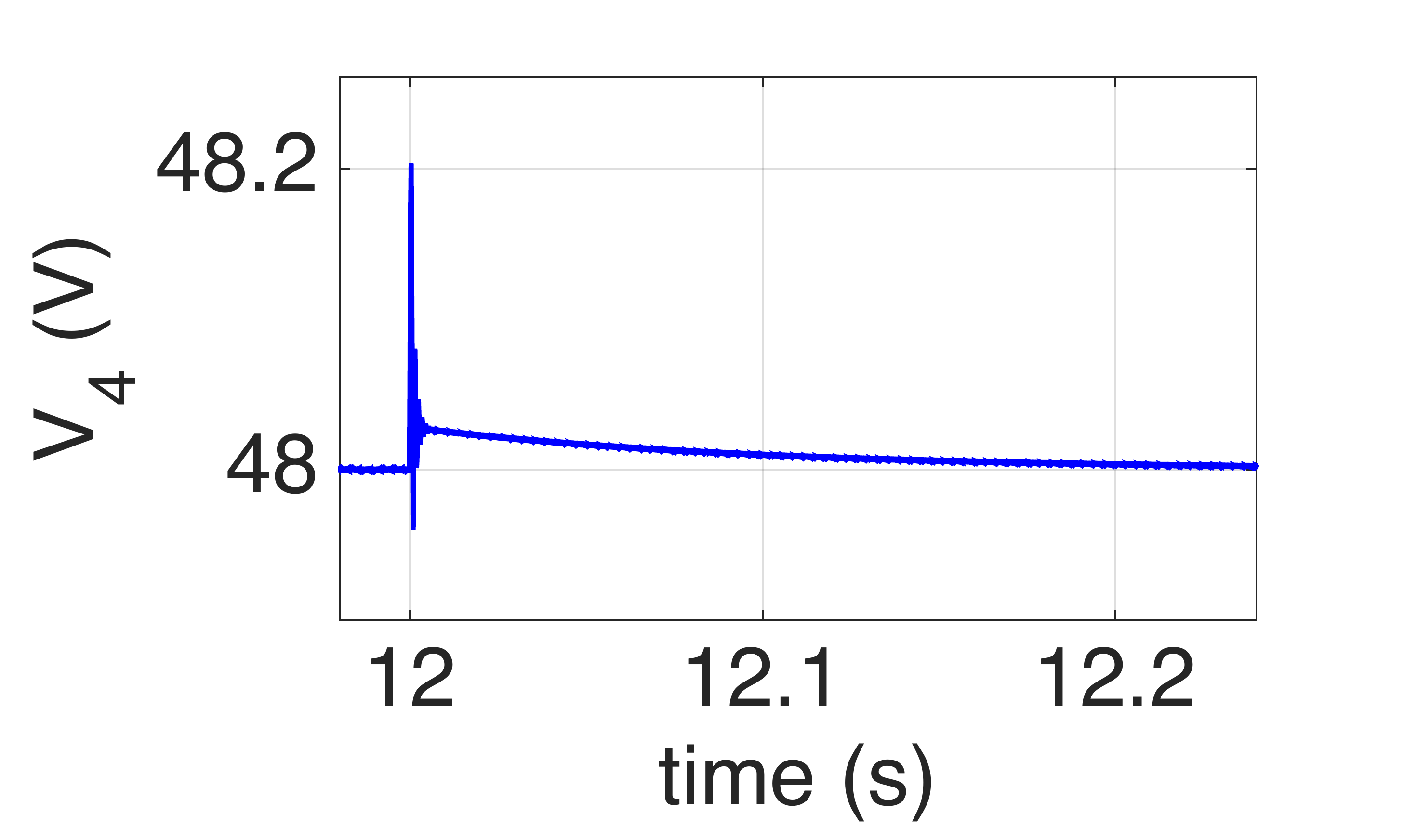}
		\caption{Voltage at $PCC_4$.}
		\label{fig:Sc2_V4_unplugging}
	\end{subfigure}
	\caption{Performance of PnP
		decentralized voltage controllers during the
		unplugging of DGU 3 at $t=12$ s.}
	\label{fig:Sc2_V_2}                 
\end{figure}

\section{Conclusions}
\label{sec:conclusions}
In this paper, a totally decentralized scalable approach for voltage
regulation in DC mG has been presented. Differently from the PnP design
algorithm in \cite{Tucci2015a,Tucci2015dc}, the synthesis of
local controllers does not require knowledge of power line
parameters. Moreover, whenever the plug-in (resp. -out) of a DGU is
required, its future (resp. previous) neighbors do not have to retune
their local controller, thus considerably simplifying the PnP protocol while still
guaranteeing the overall mG stability. Future research will focus on
characterizing how performance of the closed-loop mG, depends on the
topology of the electrical graph. We will also study how to couple the
new PnP primary control layer with higher-level regulation schemes for
achieving, e.g., current sharing.

\clearpage
\appendix
\section{How interactions among DGUs can destabilize the mG}
%\section{How interactions between DGUs can lead to mG instability}
\label{sec:AppUnstable}
	In this Appendix, we show why designing decentralized stabilizing controllers without taking into account Assumptions \ref{ass:ctrl} and \ref{ass:equal_ratio} (i.e. without counteracting the contribution of coupling terms on the total system energy computation) may lead to mG instability when DGUs are interconnected. 
	
	We consider two DGUs with dynamics 
	\begin{equation}
          \label{eq:subsystemsdec}
          \begin{aligned}
            \subss{\hat{\Sigma}}{1} &: \left\lbrace \begin{aligned}
		\subss{\dot{\hat{x}}}{1}(t) &= \hat{A}_{11}\subss{\hat{x}}{1}(t)+\hat{A}_{12}\subss{\hat{x}}{2}(t)+\hat{B}_1 \subss{u}{1}(t)+\hat{M}_1 \subss{\hat{d}}{1}(t)\\
		\subss{y}{1}(t)       &= \hat{C}_1 \subss{\hat{x}}{1}(t)
              \end{aligned}
            \right.\\
            \subss{\hat{\Sigma}}{2} &: \left\lbrace \begin{aligned}
		\subss{\dot{\hat{x}}}{2}(t) &= \hat{A}_{22}\subss{\hat{x}}{2}(t)+\hat{A}_{21}\subss{\hat{x}}{1}(t)+\hat{B}_2 \subss{u}{2}(t)+\hat{M}_2 \subss{\hat{d}}{2}(t)\\
		\subss{y}{2}(t)       &= \hat{C}_2 \subss{\hat{x}}{2}(t)
              \end{aligned}
            \right.
          \end{aligned}
	\end{equation}
        where $\subss{\hat{x}}{i}=[V_{i},I_{ti},v_{i}]^T$, $\subss{u}{i}=V_{ti}$, $\subss{y}{i}=V_{i}$, $\subss{\hat{d}}{i}=[I_{Li},V_{ref,i}]^T$, $i = 1,2$, are, respectively, the state, the control input, the controlled variable and the exogenous input (see Section \ref{sec:State-space model of the mG}). Matrices in \eqref{eq:subsystemsdec} have the same structure as in \cite{Tucci2015a}, i.e., since in this example $\NN_1 =\{2\}$ and $\NN_2 = \{1\}$, one has
        \begin{equation}
        A_{ii}=\begin{bmatrix}
        -\frac{1}{R_{ij}C_{ti}} & \frac{1}{C_{ti}} \\
        -\frac{1}{L_{ti}} & -\frac{R_{ti}}{L_{ti}} \\
        \end{bmatrix},\hspace{3mm}i = 1, 2.
        \end{equation}
      Electrical parameters, which are similar to those in \cite{shafiee2014modeling}, are reported in Table \ref{tbl:mG_parameters}. 
        
        In the sequel, we separately analyze the impact of couplings on the mG stability when DGUs are locally stabilized via either Linear Quadratic Regulators (LQRs) or through pole placement design.
        \subsection*{Linear Quadratic Regulators}
		We design decentralized controllers for each DGU assuming that they are dynamically decoupled, hence $\hat{A}_{12}=\hat{A}_{21}=0$. Since the state $\subss\hx i$ is measured, we can design the following state-feedback decentralized controllers
	\begin{equation}
          \label{eq:ctrlLaws}
          \begin{aligned}
            \subss{u}{1}(t)&=K_{1}\subss{\hat{x}}{1}(t)\\
            \subss{u}{2}(t)&=K_{2}\subss{\hat{x}}{2}(t)
          \end{aligned}
	\end{equation}
        where $K_{1}$ and $K_{2}$ are LQRs computed using the weights $Q_{1}=\diag(10^{-3}, 10^{-2}, 10^3)$, $R_1=0.1$ and $Q_{2}=\diag(10^{-2},10^{-2},10^4)$, $R_{2}=10^{-2}$, respectively. Control laws \eqref{eq:ctrlLaws} guarantee that the closed-loop decoupled DGUs
	\begin{equation}
          \label{eq:sysdecex}
          \begin{aligned}
          \begin{bmatrix}
            \subss{\dot{\hat{x}}}{1}(t)\\
            \subss{\dot{\hat{x}}}{2}(t)
          \end{bmatrix}&=\begin{bmatrix}
            \hat{A}_{11} & 0\\
            0 & \hat{A}_{22}\\
          \end{bmatrix}
          \begin{bmatrix}
            \subss{\hat{x}}{1}(t)\\
            \subss{\hat{x}}{2}(t)
          \end{bmatrix}
          +\begin{bmatrix}
            \hat{B}_{1}K_{1} & 0\\
            0 & \hat{B}_{2}K_{2}
          \end{bmatrix}
          \begin{bmatrix}
            \subss{\hat{x}}{1}(t)\\
            \subss{\hat{x}}{2}(t)
          \end{bmatrix}=\\
          &=\underbrace{\begin{bmatrix}
          \hat{A}_{11}+\hat{B}_{1}K_{1} & 0\\
          0 & \hat{A}_{22}+\hat{B}_{2}K_{2}\\
          \end{bmatrix}}_{\mbf{\hat{A}_{CL}^D}}\begin{bmatrix}
      \subss{\hat{x}}{1}(t)\\
  \subss{\hat{x}}{2}(t)
\end{bmatrix}
          \end{aligned}
	\end{equation}
        are asymptotically stable. Indeed, $\mathrm{eig} (\mbf{\hat{A}_{CL}^D})=  \mathrm{eig}(\hat{A}_{11}+\hat{B}_{1}K_{1})\text{ }\cup\text{ }\mathrm{eig}(\hat{A}_{22}+\hat{B}_{2}K_{2})$ is the set
        \begin{equation*}
       \{  -9.0629\cdot 10^3,   -0.0143\cdot 10^{3}, -0.1945\cdot 10^{3}\}\cup\{    -9.9717\cdot 10^{3},-0.0486\cdot 10^{3},-0.6064\cdot 10^{3}\}.
        \end{equation*}

	% \begin{figure}[!htb]
        %   \centering
        %   \includegraphics[width=0.5\textwidth]{eigOL}
        %   \caption{Eigenvalues of the overall open-loop system \eqref{eq:subsystemsdec}.}
        %   \label{fig:eigOpenL}
        % \end{figure} 
	% \begin{figure}[!htb]
        %   \centering
        %   \includegraphics[width=0.5\textwidth]{eigCL}
        %   \caption{Eigenvalues of the overall closed-loop system \eqref{eq:sysdecex} (in black) and of the overall closed-loop system \eqref{eq:syscouplex} (in red).}
        %   \label{fig:eigCL}
	% \end{figure}
        
        Considering coupling terms, the closed-loop system becomes  
	\begin{equation}
          \label{eq:syscouplex}
          \begin{aligned}
          \begin{bmatrix}
            \subss{\dot{\hat{x}}}{1}(t)\\
            \subss{\dot{\hat{x}}}{2}(t)
          \end{bmatrix}&=\begin{bmatrix}
            \hat{A}_{11} & \hat{A}_{12} \\
            \hat{A}_{21}  & \hat{A}_{22}\\
          \end{bmatrix}
          \begin{bmatrix}
            \subss{\hat{x}}{1}(t)\\
            \subss{\hat{x}}{2}(t)
          \end{bmatrix}
          +\begin{bmatrix}
            \hat{B}_{1}K_{1} & 0\\
            0 & \hat{B}_{2}K_{2}
          \end{bmatrix}
          \begin{bmatrix}
            \subss{\hat{x}}{1}(t)\\
            \subss{\hat{x}}{2}(t)
          \end{bmatrix}=\\
          &= \begin{bmatrix}
          \subss{\dot{\hat{x}}}{1}(t)\\
          \subss{\dot{\hat{x}}}{2}(t)
          \end{bmatrix}\underbrace{\begin{bmatrix}
          \hat{A}_{11}+ \hat{B}_{1}K_{1} & \hat{A}_{12} \\
          \hat{A}_{21}  & \hat{A}_{22}+ \hat{B}_{1}K_{1}\\
          \end{bmatrix}}_{\mbf{\hat{A}_{CL}^C}}\begin{bmatrix}
          \subss{\dot{\hat{x}}}{1}(t)\\
          \subss{\dot{\hat{x}}}{2}(t)
          \end{bmatrix}.
          \end{aligned}
	\end{equation}
        Since the controllers have been designed without taking into account interactions among DGUs, we cannot ensure that system \eqref{eq:syscouplex} is asymptotically stable. In fact, for the proposed example, we have $$\text{eig}(\mbf{\hat{A}_{CL}^C}) =\{-19077, \mbf{20 +560\mathrm{i}}, \mbf{20 -560\mathrm{i}},  -690, -161, -11\}.$$
	    \subsection*{Pole placement design} An alternative method for designing decentralized stabilizing controllers \eqref{eq:ctrlLaws}, is to assume again $\hat{A}_{12}=\hat{A}_{21}=0$ and place the closed-loop poles of subsystems $ \subss{\hat{\Sigma}}{1}$ and $ \subss{\hat{\Sigma}}{2}$ (i.e. the eigenvalues of $\mbf{\hat{A}_{CL}^D}$) in the left half plane. In particular, since the pair $(\hat A_{ii}, \hat B_{i})$ is controllable (see Proposition 2 in \cite{Tucci2015a}), we can set 
        \begin{equation}
        \label{eqn:pole_placement}
        \begin{aligned}
        \text{eig}(\hat{A}_{11}+\hat{B}_{1}K_{1}) &= \{-8.5190\cdot 10^3, -530.4, -1.46\}\\
        \text{eig}(\hat{A}_{22}+\hat{B}_{2}K_{2}) &= \{-9.3734\cdot 10^3, -571.9, -1.44\}
        \end{aligned}
        \end{equation}
and derive gains $K_1$ and $K_2$ satisfying \eqref{eqn:pole_placement} by using the algorithm in \cite{kautsky1985robust}. Obviously, the obtained controllers stabilize the closed-loop decoupled subsystems \eqref{eq:sysdecex}. However, they cannot guarantee that the interconnection of DGUs 1 and 2 (i.e. system \eqref{eq:syscouplex}) is asymptotically stable. Indeed, we get
$$ \text{eig}(\mbf{\hat{A}_{CL}^C}) = \{-18803,\mbf{23 + 2319\mathrm{i}},\mbf{23 - 2319\mathrm{i}},-237,  -0.16, -0.13 \}.$$

%        With this example we have shown that a decentralized design of decentralized controllers for each DGU is in general not possible without taking into account the interactions among subsystems. 

  \begin{table}[!htb]
	\caption{Electrical parameters of the mG with dynamics \eqref{eq:subsystemsdec}.}	
	\label{tbl:mG_parameters}
	\centering
	\begin{tabular}{*{4}{c}}
		\toprule
		 \multicolumn{4}{c}{\textbf{Converter parameters}} \\
		\midrule
		DGU & $R_t$ $(\Omega)$ & $L_{t}$ (mH) & $C_t$ (mF)\\
		\midrule
		 $\subss{\hat{\Sigma}}{1}$ & 0.1  & 1.8 & 2.2 \\
		 $\subss{\hat{\Sigma}}{2}$& 0.2 & 1.7 & 2 \\
		\midrule
	 \multicolumn{4}{c}{\textbf{Transmission line parameters}}\\
		\midrule
		 \multicolumn{2}{c}{$R_{12}$ $(\Omega)$}&  \multicolumn{2}{c}{$L_{12}$ ($\mu$H)}\\
		 \midrule
		\multicolumn{2}{c}{0.05} & \multicolumn{2}{c}{1.8} \\
		\bottomrule
	\end{tabular}
\end{table}
\clearpage
\section{Feasibility of the LMI test}
\label{sec:LMI_feasibility}
This Appendix summarizes the studies we performed in order to (i) evaluate the applicability of our \textit{line-independent} control design procedure, and (ii) provide a proper comparison between the proposed approach and our previous work \cite{Tucci2015a}. For both the analyses, LMIs have been solved in MatLab/Yalmip, using SeDuMi solver.  
	\subsection{\textit{Line-independent} design: conservativity of the plug-in test \eqref{eq:optproblem}}
	\label{subsec:Line_independent_design}
In order to better assess the applicability of the \textit{line-independent} control design procedure (and to provide a guideline to the choice of $\bar\sigma$ in \eqref{eq:equal_ratio}), we performed the following extensive analysis. For $\bar\sigma = 10$, we solved LMI \eqref{eq:optproblem} considering different values of the DGU's converter parameters $(R_{t}, L_{t}, C_{t})$, and check when LMIs are infeasible. 

Figure \ref{fig:LMI_tests} shows combinations of these parameters for which the LMIs are feasible (blue circles), and infeasible (red stars). Although we used wide ranges for converter's parameters, numerical results reveal that, for values typically found in the literature for low-voltage DC mGs\footnote{See, e.g., \cite{7438847}, \cite{shafiee2014modeling} and \cite{6497633}.} (i.e. the points within the green box in Figure \ref{fig:LMI_tests}), LMIs are always feasible. This confirms the validity of the proposed control design methodology. 
\begin{figure}[!htb]
	\centering
	\includegraphics[scale=0.3]{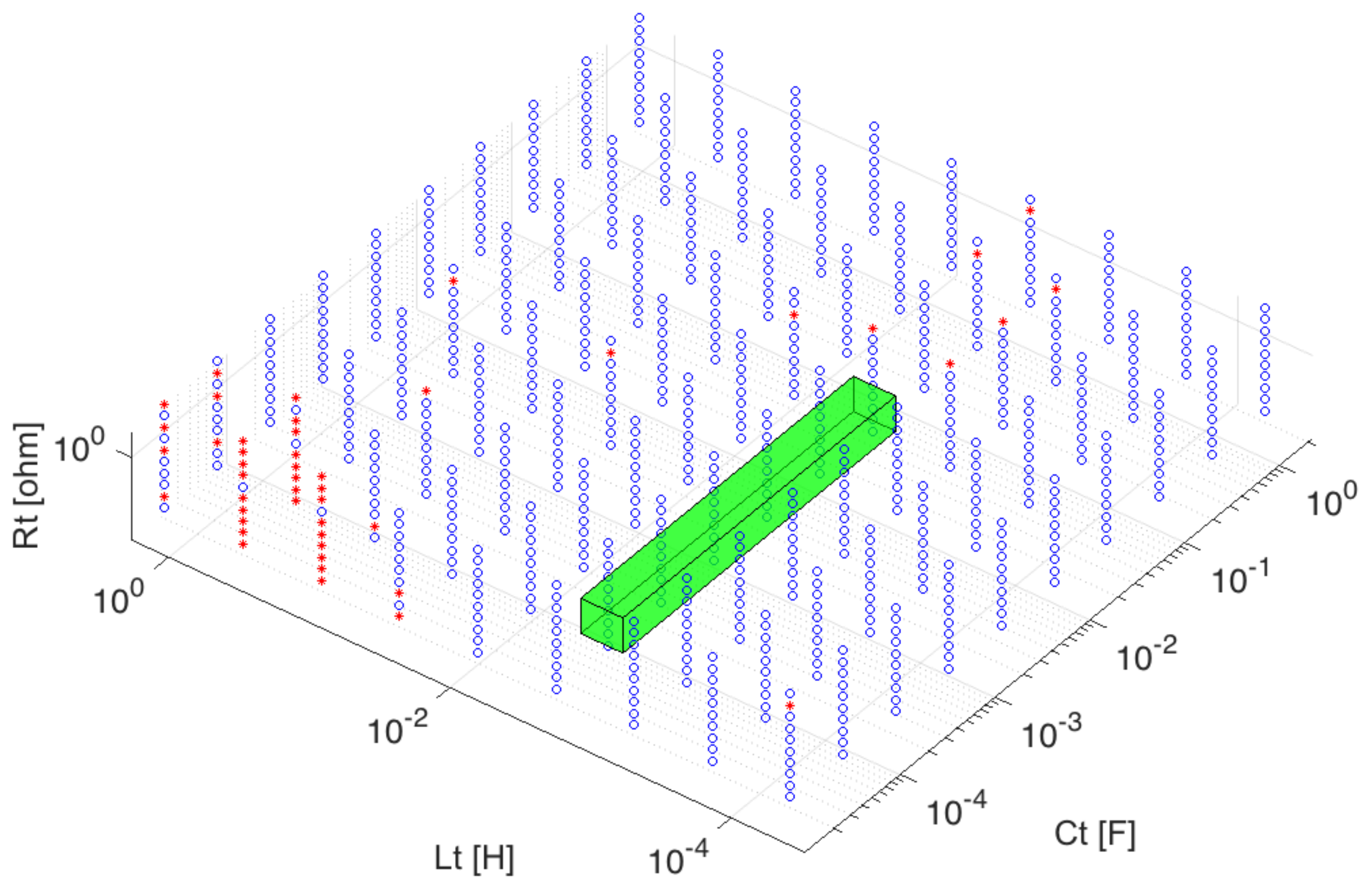}
	\caption{LMI results for combinations of $R_{t}$, $L_{t}$ and $C_{t}$. Blue circles indicate feasible LMIs while red stars correspond to infeasible ones. The green box encloses typical DGU parameters for low-voltage DC mGs.}
	\label{fig:LMI_tests}
\end{figure}

	\subsection{Comparison with the \textit{line-dependent} approach in \cite{Tucci2015a}}
	In order to compare the proposed design methodology with the one in \cite{Tucci2015a}, we show the existence of a limit on the maximum number of subsystems which can be connected to the PCC of a given DGU (say DGU $i$), before obtaining the infeasibility of the $i$-th local \textit{line-dependent} plug-in test (i.e. the LMI (25) in \cite{Tucci2015a}). 
	
	We start by considering the interconnection of DGUs 1 and 2 at stage $k=1$ (see Figure \ref{fig:max_num_DGUs_1}). Then, at each stage $k>1$, we
	solve the LMI (25) in \cite{Tucci2015a} for DGU 1 (using $\eta = 10^{-4}$), connecting one DGU at a time to $PCC_1$, thus creating the star topology shown in Figure \ref{fig:max_num_DGUs_N}. Electrical and optimization parameters used for this analysis are reported in Table \ref{tbl:el_parameters}. 
	
	Numerical results reveal that the feasibility test for DGU 1 fails when the plug-in of DGU 4 is requested. This is due to the fact that the design of each local controller depends on the parameters of power lines connecting the corresponding unit to its neighbors.
	On the other hand, in Appendix \ref{subsec:Line_independent_design} we have shown that, for $\bar\sigma=0$, the \textit{line-independent} LMIs \eqref{eq:optproblem} are always feasible.

 \begin{figure}[!htb]
	\centering
	\begin{subfigure}[b]{0.48\textwidth}%\hspace{-4.5mm} 
		\centering
		\includegraphics[width=.7\textwidth]{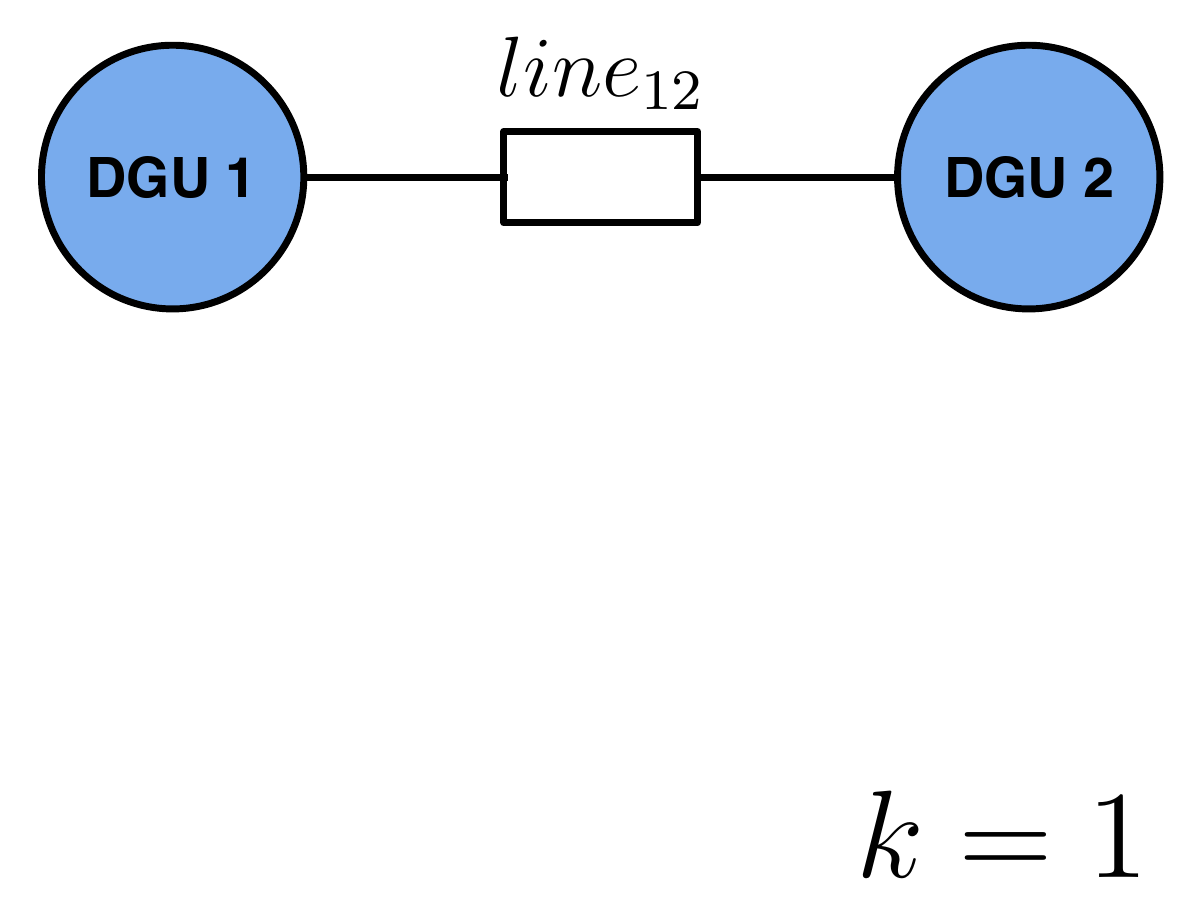}
		\caption{Starting mG.}
		\label{fig:max_num_DGUs_1}
	\end{subfigure}
	\begin{subfigure}[b]{0.48\textwidth} %\hspace{-3.45mm}
		\centering
		\includegraphics[width=1\textwidth]{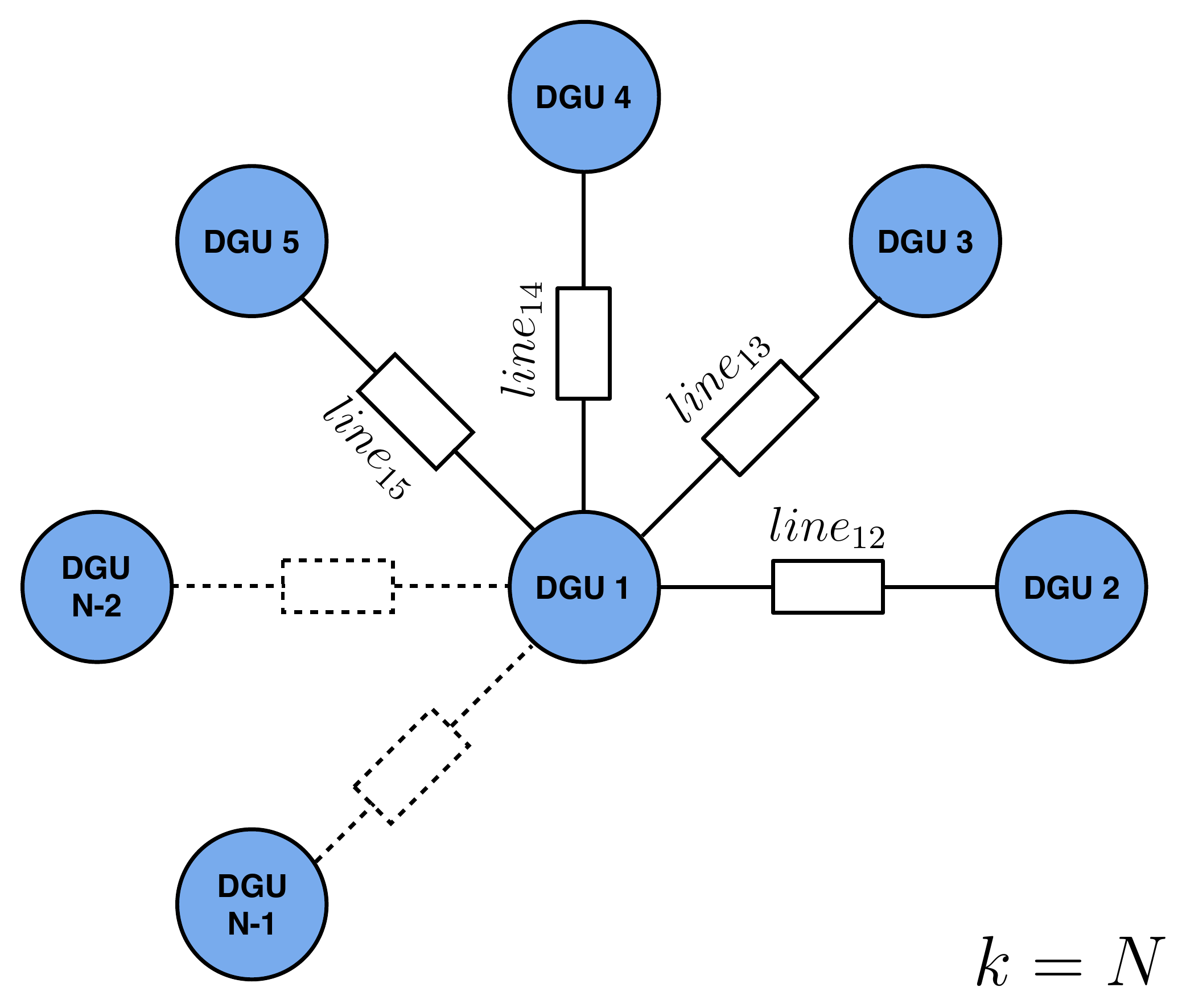}
		\caption{Star-connected mG after $N$ stages.}
		\label{fig:max_num_DGUs_N}
	\end{subfigure}
\caption{Finding the maximum number of DGUs which can be connected $PCC_1$ before obtaining a plug-in request failure.}
\end{figure}
  \begin{table}[!htb]
	\caption{Electrical and optimization parameters.}	
	\label{tbl:el_parameters}
	\centering
	\begin{tabular}{*{4}{c}}
		\toprule
		 \multicolumn{4}{c}{\textbf{Converter parameters}} \\
		\midrule
		DGU & $R_t$ $(\Omega)$ & $L_{t}$ (mH) & $C_t$ (mF)\\
		\midrule
		 $\subss{\hat{\Sigma}}{1}$ & 0.2  & 1.8 & 2.2 \\
		 $\subss{\hat{\Sigma}}{2}$& 0.3 & 2 & 2.2 \\
		 $\subss{\hat{\Sigma}}{3}$& 0.1 & 2.2 & 2.2 \\
		 $\subss{\hat{\Sigma}}{4}$& 0.5 & 3 & 2.2 \\
		\midrule
		\multicolumn{4}{c}{\textbf{Transmission line parameters}}\\
		  \toprule
	\multicolumn{2}{c}{Connected DGUs $(i,j)$} & Resistance $R_{ij} (\Omega)$ & Inductance
		$L_{ij} (\mu H)$ \\
		 \midrule
		\multicolumn{2}{c}{$(1,2)$} & 0.05 & 2.1 \\
		\multicolumn{2}{c}{$(1,3)$} & 0.07 & 1.8 \\
		\multicolumn{2}{c}{$(1,4)$} & 0.03 & 2.5 \\
		\midrule
	 \multicolumn{4}{c}{\textbf{Optimization parameters}}\\
		\midrule
		 \multicolumn{1}{c}{$\alpha_{1}$}&  \multicolumn{2}{c}{$\alpha_{2}$}&
		  \multicolumn{1}{c}{$\alpha_{3}$}\\
		 \midrule
		\multicolumn{1}{c}{$10^{-6}$} & 
		\multicolumn{2}{c}{$10^{-2}$} & 
		\multicolumn{1}{c}{$10^{-3}$} \\
		\bottomrule
	\end{tabular}
\end{table}
\clearpage
\bibliographystyle{IEEEtran}
\bibliography{IEEEabrv,PnP_improved}

\end{document}